\numberwithin{equation}{section}
\definecolor{hypercolor}{rgb}{0.1,0.2,0.6}
\newcommand{\tr}{\mathsf{tr}}
\newcommand{\eqn}[1]{(\ref{#1})}
\newcommand{\be}{\begin{equation}}
\newcommand{\ee}{\end{equation}}
\newcommand\bbone{{ \mathbb{I}}}
\newtheorem{Theorem}{Theorem}[section]
\newtheorem{theorem}[Theorem]{Theorem}
\newtheorem{atheorem}{Theorem}
\newtheorem{proposition}[Theorem]{Proposition}
\newtheorem{lemma}[Theorem]{Lemma}
\newtheorem{remark}[Theorem]{Remark}
\newtheorem{definition}[Theorem]{Definition}
\newtheorem{adefinition}{Definition}
\theoremstyle{nonumberplain}
\newtheorem{proof}{Proof}
\newcommand{\institute}[1]{\newcommand{\@institute}{#1}}
\renewcommand{\maketitle}{
\vspace*{0.5\baselineskip}
{
\center\LARGE\noindent\@title\par
}%
\vspace{1.5\baselineskip}
{
\center\normalsize\noindent\ignorespaces\@author\par
}%
\vspace{0.5\baselineskip}
{
\center\normalsize\ignorespaces\@institute\par
}%
\vspace{2\baselineskip}
}%
\begin{document}

\title{Spectral theorem in noncommutative field theories: Jacobi dynamics}

\author{Antoine G\'er\'e$^a$ and Jean-Christophe Wallet$^c$}

\institute{%
\textit{$^a$Dipartimento di Matematica, Universit\`a di Genova\\
Via Dodecaneso, 35, I-16146 Genova, Italy}\\
e-mail:\href{mailto:gere@dima.unige.it}{\texttt{gere@dima.unige.it}}\\[1ex]%
\textit{$^c$Laboratoire de Physique Th\'eorique, B\^at.\ 210\\
CNRS and Universit\'e Paris-Sud 11,  91405 Orsay Cedex, France}\\
e-mail:\href{mailto:jean-christophe.wallet@th.u-psud.fr}{\texttt{jean-christophe.wallet@th.u-psud.fr}}
}%

\maketitle


{{
\small
%


}}


\begin{abstract}
Jacobi operators appear as kinetic operators of several classes of noncommutative field theories (NCFT) considered recently. This paper deals with the case of bounded Jacobi operators. A set of tools mainly issued from operator and spectral theory is given in a way applicable to the study of NCFT. As an illustration, this is applied to a gauge-fixed version of the induced gauge theory on the Moyal plane expanded around a symmetric vacuum. The characterization of the spectrum of the kinetic operator is given, showing a behavior somewhat similar to a massless theory. An attempt to characterize the noncommutative geometry related to the gauge fixed action is presented. Using a Dirac operator obtained from the kinetic operator, it is shown that one can construct an even, regular, weakly real spectral triple. This spectral triple does not define a noncommutative metric space for the Connes spectral distance.
\end{abstract}

\section{Introduction}

Noncommutative Geometry (NCG) \cite{Connes1} (see also \cite{GBVF}) may provide a way to escape physical obstructions to the existence of continuous space-time and commuting coordinates at the Planck scale \cite{Doplich1}. This has reinforced the interest in noncommutative field theories (NCFT), which appeared in their modern formulation first in String field theory \cite{witt1}, followed by models on the fuzzy sphere and almost commutative geometries \cite{mdv1}, \cite{gm90} while NCFT on noncommutative Moyal spaces received attention from the end of the 90's. For reviews, see for instance \cite{dnsw-rev}.\par 

It appears that the kinetic operators for the actions of several classes of NCFT considered so far are Jacobi operators \cite{akhiez:1965}. Jacobi operators can be presented informally as symmetric tridiagonal infinite matrices related to a 3-term recurrence (defining) relation \cite{akhiez:1965}. It turns out that Jacobi kinetic operators manifest themselves under this matrix form when the action is expressed within a so called matrix base formalism. Of course, not all NCFT have Jacobi kinetic operators but this latter case encompasses several NCFT of interest. This is for instance the case of the Grosse-Wulkenhaar model on $\mathbb{R}^4_\theta$ and $\mathbb{R}^2_\theta$ \cite{gw1}, \cite{gw2}, its rotationally invariant descendants \cite{gwm-4} and fermionic extensions \cite{VT}, the scalar $\phi^4$ model on $\mathbb{R}^3_\lambda$ \cite{vit-wal-12} together with the massless gauge theory on $\mathbb{R}^3_\lambda$ \cite{gvw13} whose truncation to the fuzzy sphere is related to the Alekseev-Recknagel-
Schomerus matrix model \cite{ARS}.\par 

To a given Jacobi operator corresponds a given family of orthogonal polynomials. This is the actual reason why definite families of orthogonal polynomials are automatically singled out in the diagonalization procedure of the kinetic operators for the above mentionned NCFT, and exemplified in this paper for another specific NCFT. Orthogonal polynomials are subject of intense activity in applied mathematics, ranking from approximation theory and numerical calculus to constructions of new families of multivariable orthogonal polynomials and families of random variables. The NCFT considered in this paper corresponds to Chebyshev polynomial of 2d kind, a particular type of orthogonal Jacobi polynomials. These determine entirely the kinetic operator via the spectral theorem, hence the (free) dynamics of the system. This explains the statement ''Jacobi dynamics'' in the title of the paper.\par 

The numerous families of orthogonal polynomials are classified within the Askey scheme that will not be essential in the present paper. For more details, see e.g \cite{kks} and references therein. The fact that an orthogonality relation can be defined within the above orthogonal families stems from the existence of an integration measure which is a mere consequence of the spectral theorem, sometimes put in a particular form known as the Favard theorem \cite{favard} which may be traced back to works of Stieltjes \cite{stieltjes}. The measure can be determined from its moments, giving rise to the so called (determinate or indeterminate) moment problem, that will not be central in the present paper. Note that the notion of moments is familiar to the theory of probability. There is a huge mathematical literature related to these topics, among which the reader is invited to see e.g \cite{akhiez:1965}, \cite{Gszego}, \cite{bsimon}, \cite{teschl}.\par 

In this paper, we consider the case of bounded Jacobi operators. We select among the different mathematical domains mentionned above the mathematical tools relevant to the construction and the study of (quantum) NCFT, focusing on functional analysis formalism which is the most suitable one to deal with theories expressed within ''matrix bases''. The corresponding mathematical toolkit tailored for a use in NCFT is presented in the Section \ref{subsection21}. Note that many of the properties given in Section \ref{subsection21} still apply to the case of unbounded Jacobi operators. This latter case will be presented elsewhere \cite{unboud-jac}. \\
As a representative example, we apply the material of Section 2 to a NCFT on the Moyal plane $\mathbb{R}^2_\theta$. This is presented in the Section \ref{subsection22}. The NCFT is a gauge-fixed version of the so called ''induced gauge theory action'' \cite{GWW}, \cite{GW07} (see \eqref{inducedgaugematrix} in Section 3), expanded around a particular symmetric vacuum \cite{GWW2} which has been considered recently \cite{MVW13}. The NCFT has a bounded Jacobi kinetic operator and polynomial interactions after suitable gauge fixing. A complete characterization of the spectrum of the kinetic operator is given together with a very simple derivation of the propagator, stemming directly from the spectral theorem. A review on the situation of noncommutative gauge theories is given in subsection \ref{section-review}.\\
The induced gauge theory action \eqref{inducedgaugematrix} can be related to a particular NCG described by a finite volume spectral triple \cite{finite-vol}, homothetic as a noncommutative metric space \cite{homot-moyal} to the standard noncommutative Moyal space \cite{marseil1} whose Connes spectral distance has been investigated in \cite{moyal1}. Note that a theory of quantum (i.e noncommutative) compact metric spaces has been set up in \cite{Rieffel}. An extension to the (noncommutative analog of) locally compact case has been given in \cite{latrem}, to which Moyal spaces belong. However, the NCFT of section \ref{subsection22} results from \eqref{inducedgaugematrix} after an expansion around some vacuum and a gauge-fixing which thus modify the initial kinetic operator in \eqref{inducedgaugematrix} and so the related Dirac operator. One question would be to characterize the NCG related to the resulting NCFT which would require to built a spectral triple and then perform the full computation of some
spectral action with relevant perturbation of the Dirac operator. The Section \ref{ncg} is a partial attempt in that direction. We construct spectral triples from a Dirac operator whose square equals the bounded Jacobi kinetic operator. In the section \ref{discuss}, we discuss the results and conclude.\par 
We find a complete characterization of the spectrum of the kinetic operator. In particular, its spectrum involves $0$ as a limit point which may be interpreted as a massless kinetic operator. The corresponding unbounded propagator is obtained as a simple application of Section \ref{subsection21}, therefore completing the first analysis performed in \cite{MVW13}.  At the perturbative level, the NCFT exhibits unsuppressed correlations at large separation between indices. By using a Dirac operator built from the kinetic operator, we show that one can construct a spectral triple. This spectral triple cannot defines a noncommutative metric space for the Connes spectral distance, i.e is not a spectral metric space in the sense of \cite{homot-moyal}, \cite{bel-mar}.


\section{Toolkit for NCFT: Spectral theorem and Jacobi operators}\label{subsection21}

This section is written for physicists. We collect the necessary (basic) mathematical background needed to characterize spectral properties underlying the actions of several classes of NCFT as well as matrix models recently considered. The material is based on Jacobi operators, orthogonal polynomials together with a special version of the spectral theorem, sometimes called the Favard theorem. General definitions and usefull properties of operator algebra are collected in \ref{appendix1} to make the presentation self contained. In particular, we explicitely state in \ref{appendix1} the special version of the spectral theorem that is at the crossroad of the Jacobi operators, orthogonal polynomials used here and spectral properties of kinetic operators of the NCFT considered in the paper. For more details, the reader should borrow material from e.g \cite{akhiez:1965}, \cite{Gszego}, \cite{bsimon}, \cite{teschl}.\par
Let $\ell^2(\mathbb{N})$ be the Hilbert spaces of square integrable sequences $(u_n)_{n\in\mathbb{N}}$ with canonical Hilbert product $\langle (u_n),(v_m)\rangle:=\sum_{k\in\mathbb{N}}u_kv_k^*$. Let $\mu$ be a positive Borel measure on $\mathbb{R}$ satisfying $\int_\mathbb{R} x^nd\mu(x)<\infty$ for any $n\in\mathbb{N}$ and $\int_\mathbb{R} d\mu(x)=1$. Let $L^2(\mu)$ denotes the Hilbert space{\footnote{It is understood that 2 functions $a$ and $b$ are identified whenever $\int_\mathbb{R}\vert a(x)-b(x) \vert^2d\mu(x)=0$.}} of squared integrable functions on $\mathbb{R}$ with product $\langle a,b\rangle:=\int_\mathbb{R}a(x)b(x)^*d\mu(x)$. 
\begin{definition}\label{defin-jacobi}
A Jacobi operator $J$ acting on the Hilbert space $\ell^2(\mathbb{N})$ is defined by
\begin{eqnarray}
(Je)_m&=&a_me_{m+1}+b_me_m+a_{m-1}e_{m-1},\ m\ge1;\nonumber\\ 
(Je)_0&=&a_0e_1+b_0e_0 \label{jacoboperator},
\end{eqnarray}
where $\{e_m\}_{m\in\mathbb{N}}$ denotes the canonical orthonormal basis of $\ell^2(\mathbb{N})$ and $a_m\in\mathbb{R}^+$, $b_m\in\mathbb{R}$, $\forall m\in\mathbb{N}$.
\end{definition}
By using the explicit expression for the elements of the canonical basis, it is easy to represent \eqref{jacoboperator} as an infinite real symmetric tridiagonal matrix with $b_m$ (resp. $a_m$) $\forall m\in\mathbb{N}$ as diagonal (resp. upper and lower subdiagonals) elements. \par 
The properties of the Jacobi operator depend on the behavior of the sequences $\{a_m\}$, $\{b_m\}$, $m\in\mathbb{N}$. We start by listing simple useful properties:
\begin{proposition}\label{extend-self}
Let ${\cal{D}}(\ell^2(\mathbb{N}))$ be the set of finite linear combination of the elements of the canonical basis $\{e_m\}_{m\in\mathbb{N}}$ of $\ell^2(\mathbb{N})$.
\begin{itemize}
\vspace*{-3pt}
\setlength{\itemsep}{-1pt}
\item i) $J$ as defined in \eqref{jacoboperator} is a symmetric operator with dense domain ${\cal{D}}(\ell^2(\mathbb{N}))$.
\item ii) If $J$ is bounded, then it extends to a self-adjoint operator on $\ell^2(\mathbb{N})$.
\end{itemize}
\end{proposition}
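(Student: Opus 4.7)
For part (i), the plan is straightforward. First I would argue that ${\cal{D}}(\ell^2(\mathbb{N}))$ is dense in $\ell^2(\mathbb{N})$: any $u=(u_n)\in\ell^2(\mathbb{N})$ is approximated in norm by its truncations $u^{(N)}:=\sum_{n=0}^{N}u_ne_n\in{\cal{D}}(\ell^2(\mathbb{N}))$, since $\|u-u^{(N)}\|^2=\sum_{n>N}|u_n|^2\to 0$ as $N\to\infty$. Next, I would check that $J$ is well defined on ${\cal{D}}(\ell^2(\mathbb{N}))$: for $e\in{\cal{D}}(\ell^2(\mathbb{N}))$, the expression $Je$ defined by \eqref{jacoboperator} is again a finite linear combination of basis elements (the nonzero components are shifted by at most one index), so $Je\in\ell^2(\mathbb{N})$.

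To establish symmetry, I would test $\langle Je,f\rangle=\langle e,Jf\rangle$ on pairs of canonical basis vectors and extend by sesquilinearity. Using \eqref{jacoboperator} and the orthonormality of $\{e_m\}$, the only nonvanishing matrix elements are $\langle Je_m,e_m\rangle=b_m$ and $\langle Je_m,e_{m\pm 1}\rangle=a_m$ or $a_{m-1}$. The hypothesis $a_m\in\mathbb{R}^+$, $b_m\in\mathbb{R}$ ensures that the matrix of $J$ in the canonical basis is real and symmetric, which immediately yields $\langle Je_n,e_m\rangle=\langle e_n,Je_m\rangle$ for all $n,m$. There is no subtle obstacle here; the computation is direct.

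For part (ii), the plan is to invoke the standard extension procedure for bounded operators. If $J$ is bounded on the dense subspace ${\cal{D}}(\ell^2(\mathbb{N}))$, then by the bounded linear transformation (BLT) theorem it admits a unique bounded extension, still denoted $J$, to all of $\ell^2(\mathbb{N})$. By continuity of the inner product, the identity $\langle Ju,v\rangle=\langle u,Jv\rangle$ carries over from ${\cal{D}}(\ell^2(\mathbb{N}))$ to arbitrary $u,v\in\ell^2(\mathbb{N})$, so the extension is symmetric and everywhere defined. A symmetric operator defined on the whole Hilbert space is automatically self-adjoint (its adjoint has domain containing $\ell^2(\mathbb{N})$, hence equal to it, and coincides with $J$ on a dense set, thus everywhere by boundedness).

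I expect no serious obstacle in either part; the statement essentially repackages two elementary facts (symmetry of real tridiagonal matrices, and the Hellinger--Toeplitz-type coincidence of symmetric and self-adjoint for bounded everywhere-defined operators). The only point worth mentioning explicitly is that part (ii) assumes boundedness as a hypothesis; the nontrivial question of when \eqref{jacoboperator} yields a bounded operator (e.g.\ under uniform bounds on the sequences $\{a_m\}$, $\{b_m\}$) is not addressed here and would presumably be taken up later in the section dealing with criteria for boundedness.
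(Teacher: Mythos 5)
Your proposal is correct and follows essentially the same route as the paper: a direct verification of symmetry on the canonical basis together with density of the finite linear combinations for part (i), and the bounded-extension plus ``symmetric and everywhere defined implies self-adjoint'' argument for part (ii). You merely spell out the steps (BLT extension, continuity of the inner product, the Hellinger--Toeplitz-type conclusion) that the paper compresses into ``the mere use of standard definitions''.
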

\begin{proof}
By a simple calculation, one checks that $\langle Ju,v \rangle=\langle u,Jv \rangle$ for any $u,v\in{\cal{D}}(\ell^2(\mathbb{N}))$. Next, one observes that ${\cal{D}}(\ell^2(\mathbb{N}))$ is dense in $\ell^2(\mathbb{N})$. Hence, $J$ is a symmetric operator of $\ell^2(\mathbb{N})$ with dense domain ${\cal{D}}(\ell^2(\mathbb{N}))$ and i) is proven. Let $J^\dag$ the adjoint of $J$. When $J$ is bounded, the mere use of standard definitions yields $Dom(J)=Dom(J^\dag)=\ell^2(\mathbb{N})$ while $J$ and $J^\dag$ have the same action. Hence 
$J=J^\dag$ which proves ii).
\end{proof}
In the rest of this paper, we will consider only bounded Jacobi operators. Notice that a bounded Jacobi operator is automatically self-adjoint. We recall that not all symmetric densely defined operators have self-adjoint extensions. Their existence is controlled by the deficiency indices $(n_+,n_-)${\footnote{Let $T$ be a densely symmetric operator. The deficiency indices are defined as $n_\pm=\dim{\cal{N}}_{\pm i}$ where ${\cal{N}}_z:=\{u\in Dom(T^\dag):T^\dag u=zu \}$ for $z\in\mathbb{C}\backslash\mathbb{R}$}}. A necessary and sufficient condition for a densely symmetric operator to have self-adjoint extensions is that $n_+=n_-$ which will be automatically verified by our Jacobi operators in this paper since they are  assumed to be bounded, hence self-adjoint so that their spectrum is real, implying $n_+=n_-=0$.\par
For bounded Jacobi operators, there is a constraint on the related sequences $\{a_m\}$, $\{b_m\}$, $m\in\mathbb{N}$.
\begin{proposition}\label{jaco-bounded}
Let $J$ a Jacobi operator defined by the real sequences $\{a_m\}$, $\{b_m\}$, $m\in\mathbb{N}$ as in \eqref{jacoboperator}. The following properties hold true:
\begin{itemize}
\vspace*{-3pt}
\setlength{\itemsep}{-1pt}
\item i) If $J$ is bounded, then, $\{a_m\}_{m\in\mathbb{N}}$ and $\{b_m\}_{m\in\mathbb{N}}$ are bounded sequences.
\item ii) If $\{a_m\}_{m\in\mathbb{N}}$ and $\{b_m\}_{m\in\mathbb{N}}$ satisfy $\sup_{m\in\mathbb{N}}(|a_m|+|b_m|)\le M$, then $||J||\le2 M$.
\end{itemize}
\end{proposition}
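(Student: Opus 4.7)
My plan is as follows. For part (i), the natural strategy is to test $J$ against the canonical basis vectors. A direct application of the defining relation \eqref{jacoboperator} together with orthonormality of $\{e_m\}$ gives $\|Je_m\|^2 = a_{m-1}^2 + b_m^2 + a_m^2$ (with the convention $a_{-1}:=0$). Since $\|e_m\|=1$, the operator norm bound $\|Je_m\|\le \|J\|$ immediately forces $|a_m|\le\|J\|$ and $|b_m|\le\|J\|$ for every $m$, so both sequences are bounded by $\|J\|$. This step is essentially routine.

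For part (ii), the plan proceeds in two stages. In the first stage, I need to know $J$ is bounded at all in order to invoke Proposition \ref{extend-self}(ii) and conclude that $J$ is self-adjoint, so that I may later use the variational characterization of its norm. To establish boundedness, I decompose $J = S + S^* + D$, where $Se_m := a_m e_{m+1}$ is the weighted right-shift (hence $S^* e_m = a_{m-1} e_{m-1}$) and $De_m := b_m e_m$ is the diagonal multiplication. Standard arguments give $\|S\|=\|S^*\|=\sup_m|a_m|$ and $\|D\|=\sup_m|b_m|$, each bounded by $M$, so the triangle inequality already yields the crude estimate $\|J\|\le 3M$. By Proposition \ref{extend-self}(ii), $J$ is then bounded self-adjoint.

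In the second stage, I sharpen the constant from $3M$ to $2M$ using the variational formula $\|J\|=\sup_{\|u\|=1}|\langle Ju,u\rangle|$ valid for bounded self-adjoint operators. Expanding, the off-diagonal contribution collapses to cross terms of the form $2a_m\operatorname{Re}(\bar u_m u_{m+1})$. Applying the elementary AM--GM inequality $2|u_m||u_{m+1}|\le |u_m|^2+|u_{m+1}|^2$ lets me replace these by diagonal weights; after re-indexing, the coefficient of $|u_k|^2$ becomes $|b_k|+|a_k|+|a_{k-1}|$. The hypothesis $|a_k|+|b_k|\le M$ handles two of the three terms together, and the bound $|a_{k-1}|\le M$ (from applying the hypothesis at index $k-1$) handles the third, giving $|\langle Ju,u\rangle|\le 2M\|u\|^2$, hence $\|J\|\le 2M$.

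The main obstacle is precisely the sharp constant. Any straightforward estimate based on the triangle inequality applied to the decomposition $J=S+S^*+D$, or to $\|Ju\|^2$ via $(x+y+z)^2\le 3(x^2+y^2+z^2)$ componentwise, produces only $3M$ (or worse). The improvement to $2M$ is not a routine rearrangement: it genuinely requires self-adjointness to switch from $\|Ju\|$ to the quadratic form $\langle Ju,u\rangle$, where the two off-diagonal shifts $S$ and $S^*$ combine into a single real cross term that AM--GM can absorb into the diagonal sum. That step is the non-trivial ingredient, and it also explains why one must first secure boundedness of $J$ by some looser estimate before the sharp bound can be extracted.
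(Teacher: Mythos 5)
Your proposal is correct, and part (i) is essentially the paper's argument (both test $J$ against canonical basis vectors; the paper reads off the matrix elements $\langle Je_k,e_{k+1}\rangle$ and $\langle Je_k,e_k\rangle$, you read off $\|Je_m\|^2=a_{m-1}^2+b_m^2+a_m^2$ --- same content). For part (ii), however, your route is genuinely different from the paper's. The paper estimates $\|Ju\|^2$ directly for a unit vector $u$: it expands the square, which produces three families of cross terms, controls them via the shift operator $S$ and the bounds $|\langle Su,u\rangle|\le 1$, $|\langle S^2u,u\rangle|\le 1$, and lands on $4a^2+b^2+4ab\le 4(a+b)^2\le 4M^2$ with $a=\sup_m|a_m|$, $b=\sup_m|b_m|$. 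You instead first secure boundedness with the crude estimate $\|J\|\le\|S\|+\|S^*\|+\|D\|\le 3M$ from the decomposition $J=S+S^*+D$, invoke Proposition \ref{extend-self}(ii) for self-adjointness, and then sharpen to $2M$ via the numerical-radius identity $\|J\|=\sup_{\|u\|=1}|\langle Ju,u\rangle|$: the quadratic form has only the single cross term $2a_k\Re(u_k\overline{u_{k+1}})$, which AM--GM absorbs into the diagonal with coefficient $|b_k|+|a_k|+|a_{k-1}|\le 2M$. Your version is cleaner (one cross term instead of three, and it makes transparent why the constant is $2$ and not $3$) at the price of invoking the variational characterization of the norm of a self-adjoint operator and of needing the preliminary boundedness step; the paper's version is self-contained and never leaves the estimate of $\|Ju\|$, though its displayed inequality chain is considerably messier. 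Both are valid and both yield the stated constant $2M$.
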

\begin{proof}
Assume $J$ is bounded. Then, one has for any $u,v\in\cal{H}$ $|\langle Ju,v \rangle|\le ||J||$. This is true for $u=e_k$, $v=e_{k+1}$ and $u=e_k$, $v=e_k$ which using Definition \ref{defin-jacobi} yields $|a_m|\le||J||$, $b_m\le||J||$ which proves i). \\
Assume now $\sup_{m\in\mathbb{N}}(|a_m|+|b_m|)\le M$. For any unit vector $u\in\ell^2(\mathbb{N})$, $u=\sum_mu_me_m$, a simple calculation yields
\begin{eqnarray}
||Ju||^2&=&\sum_ma^2_{m-1}|u_{m+1}|^2+a^2_m|u_{m-1}|^2+b^2_m|u_m|^2+2a_mb_m(\Re(u_{m-1}u^*_m)+\Re(u_{m+1}u^*_m))\nonumber\\
&+&2a_ma_{m-1}\Re(u_{m-1}u^*_{m+1})\le\sum_ma^2(|u_{m+1}|^2+|u_{m-1}|^2)+b^2|u_m|^2\nonumber\\
&+&2ab(|\Re(u_{m-1}u^*_m)|+|\Re(u_{m+1}u^*_m)|)+2a^2|\Re(u_{m-1}u^*_{m+1})|\nonumber\\
&\le&2a^2+b^2+2ab(|\Re(u_{m-1}u^*_m)|+|\Re(u_{m+1}u^*_m)|)+2a^2|\Re(u_{m-1}u^*_{m+1})|\nonumber\\
&\le&2a^2+b^2+4ab|\langle Su,u \rangle|+2a^2|\langle S^2u,u\rangle|\le4a^2+b^2+4ab\le4M^2
\label{sum-ju-maj}
\end{eqnarray}
where we set $\sup_m|a_m|=a$, $\sup_m|b_m|=b$ and $S$ is the bounded shift operator defined by $
S:\ell^2(\mathbb{N})\to\ell^2(\mathbb{N}),\ S:e_m\mapsto e_{m+1},\ \forall m\in\mathbb{N}$
It satisfies obviously $||S||=1$ and $|\langle Su,u\rangle|\le1$, $|\langle S^2u,u \rangle|\le1$. Hence, $||J||\le2M$ and ii) is proven.
\end{proof} 
Now, apply the spectral theorem for bounded operator Theorem \ref{orthop-1} to $J$. This latter guarantees the existence of a natural (and unique) compactly supported measure rigidely linked to $J$. Indeed, from Theorem \ref{orthop-1}, there is a unique resolution $E$ of the identity on $\ell^2(\mathbb{N})$ such that for any $u,v\in\ell^2(\mathbb{N})$, 
\begin{equation}
\langle Ju,v \rangle=\int_\mathbb{R}tdE_{u,v}(t). 
\end{equation}
Let $U$ be a Borel set. From Definition \ref{resolution}, one can define a Borel measure by $U\mapsto E_{u,v}=\langle E(U)u,v \rangle$, for any $u,v\in\ell^2(\mathbb{N})$. Here, $e_0$ in the canonical basis of $\ell^2(\mathbb{N})$ is a convenient vector to be used by noticing in particular that the set $\{J^ne_0,\ n\in\mathbb{N}\}$ is dense in $\ell^2(\mathbb{N})$, as a standard computation shows, i.e $e_0$ is cyclic for the action of $J$. We have an immediate property summarized in the following proposition.
\begin{proposition}\label{measure-proba}
The measure $\mu$ defined by $\mu(U):=E_{e_0,e_0}=\langle E(U)e_0,e_0 \rangle$, for any Borel set $U$ is Borel positive.
\end{proposition}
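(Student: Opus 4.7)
The plan is to unpack the definition of a resolution of the identity and exploit the two key structural facts that (a) each $E(U)$ is an orthogonal projection on $\ell^2(\mathbb{N})$, and (b) $E$ is countably additive in the strong operator topology. These are precisely the ingredients packaged into the definition of a resolution of the identity that is assumed in Definition \ref{resolution} and invoked in the statement of Theorem \ref{orthop-1}, so the proof is essentially a verification that the two measure axioms and positivity survive when one pairs $E(\cdot)$ with the fixed vector $e_0$.

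First I would establish positivity. Since $E(U)$ is an orthogonal projection, one has $E(U) = E(U)^2$ and $E(U) = E(U)^\dagger$. Hence
\begin{equation}
\mu(U) = \langle E(U) e_0, e_0 \rangle = \langle E(U)^2 e_0, e_0 \rangle = \langle E(U) e_0, E(U) e_0 \rangle = \| E(U) e_0 \|^2 \ge 0,
\end{equation}
which also makes transparent why $e_0$ does not play a privileged role here: the same calculation works for any vector.

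Next I would check the two measure axioms. The fact that $E(\emptyset) = 0$ gives $\mu(\emptyset) = 0$. For countable additivity, let $\{U_n\}_{n\in\mathbb{N}}$ be a family of pairwise disjoint Borel subsets of $\mathbb{R}$ with union $U$. The defining property of a resolution of the identity gives
\begin{equation}
E(U) e_0 = \sum_{n\in\mathbb{N}} E(U_n) e_0
\end{equation}
in the norm of $\ell^2(\mathbb{N})$, so by continuity of the inner product
\begin{equation}
\mu(U) = \langle E(U) e_0, e_0 \rangle = \sum_{n\in\mathbb{N}} \langle E(U_n) e_0, e_0 \rangle = \sum_{n\in\mathbb{N}} \mu(U_n).
\end{equation}
Combined with positivity this shows $\mu$ is a positive Borel measure on $\mathbb{R}$.

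There is no genuine obstacle: the only thing to be careful about is invoking the correct form of $\sigma$-additivity of a resolution of identity (strong, not norm, convergence of the projector sum), but once paired against a single vector $e_0$ this immediately degenerates into scalar $\sigma$-additivity. It is also worth noting, as a small coda, that the finiteness $\mu(\mathbb{R}) = \langle E(\mathbb{R}) e_0, e_0 \rangle = \langle e_0, e_0 \rangle = 1$ shows $\mu$ is in fact a probability measure, which is exactly the normalization required of the measure $\mu$ appearing in Section \ref{subsection21} and which anticipates its role as the orthogonality measure of the associated family of orthogonal polynomials.
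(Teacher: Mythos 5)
Your proof is correct and its core step---writing $\mu(U)=\langle E(U)^2e_0,e_0\rangle=\|E(U)e_0\|^2\ge0$ using that $E(U)$ is an orthogonal projection---is exactly the argument the paper gives. The additional verification of $\sigma$-additivity is harmless but already packaged into point v) of Definition \ref{resolution}, which the paper simply invokes.
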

\begin{proof}
Since $E(U)$ is an orthogonal projection, one has 
\begin{equation}
\mu(U)=\langle E(U)^2e_0,e_0  \rangle=\langle E(U)e_0,E(U)e_0\rangle\ge0. 
\end{equation}
Hence $\mu(U)$ is positive Borel with support $supp(\mu)\subseteq[-||J||,||J||]$. 
\end{proof}
The fact that $\mu$ can be related to polynomials shows up in the following proposition. It stems from the fact one can generate all the vectors of the canonical basis by a suitable polynomial action of $J$ on $e_0$.
\begin{proposition}\label{matrixelem-spectral-meas}
Let $J$ a bounded Jacobi operator with associated resolution of identity $E$ and $\{e_n\}_{n\in\mathbb{N}}$ the canonical basis of $\ell^2(\mathbb{N})$. For any $n\in\mathbb{N}$, one can find $p_n$ a polynomial of degree $n$ in $\mathbb{R}[X]$ such that
\begin{equation}
e_n=p_n(J)e_0.
\end{equation}
The spectral measure is completely determined by $\mu$, i.e
\begin{equation}
\langle E(U)e_n,e_m \rangle=\int_{U}p_n(t)p_m(t)d\mu(t).
\end{equation}
\end{proposition}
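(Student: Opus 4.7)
The plan is to separate the statement into its two assertions and handle them in order: first the construction of the polynomials $p_n$ by induction via the Jacobi recurrence, then the identification of the matrix elements of the spectral projection using the functional calculus and the cyclicity of $e_0$.

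For the existence of $p_n$, I will set $p_0(X)=1$ so that $e_0=p_0(J)e_0$, and $p_1(X)=\tfrac{1}{a_0}(X-b_0)$, which combined with \eqref{jacoboperator} gives $e_1=p_1(J)e_0$. For the inductive step, assuming $e_k=p_k(J)e_0$ for all $k\le n$, I rewrite the recurrence in Definition \ref{defin-jacobi} as
\begin{equation}
a_n e_{n+1}=Je_n-b_n e_n-a_{n-1}e_{n-1},
\end{equation}
and substitute the inductive hypothesis on the right-hand side to obtain $e_{n+1}=p_{n+1}(J)e_0$ with
\begin{equation}
p_{n+1}(X)=\frac{1}{a_n}\bigl((X-b_n)p_n(X)-a_{n-1}p_{n-1}(X)\bigr).
\end{equation}
Since $a_n>0$ by Definition \ref{defin-jacobi}, the leading term of $p_{n+1}$ is $X^{n+1}/(a_0a_1\cdots a_n)$, so $\deg p_{n+1}=n+1$. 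This also confirms that the $p_n$ have real coefficients.

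For the second assertion, I exploit that $J$ is bounded self-adjoint (Proposition \ref{extend-self}) and therefore admits a Borel functional calculus $f\mapsto f(J)$ compatible with the resolution of identity $E$; in particular, every bounded Borel function of $J$ commutes with each projection $E(U)$, and real-coefficient polynomials $p(J)$ are self-adjoint. Starting from $e_n=p_n(J)e_0$ and $e_m=p_m(J)e_0$, I compute
\begin{equation}
\langle E(U)e_n,e_m\rangle=\langle E(U)p_n(J)e_0,p_m(J)e_0\rangle=\langle p_m(J)E(U)p_n(J)e_0,e_0\rangle=\langle E(U)(p_np_m)(J)e_0,e_0\rangle,
\end{equation}
using self-adjointness of $p_m(J)$ and the commutation of $p_m(J)$ with $E(U)$. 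Finally, the functional calculus identity $\langle E(U)f(J)e_0,e_0\rangle=\int_U f(t)\,d\mu(t)$ (applied with $f=p_np_m$ and with $\mu$ as defined in Proposition \ref{measure-proba}) yields the claimed formula.

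I do not expect a genuine obstacle here: the inductive argument is forced by the three-term relation, and the second part is a direct consequence of the functional calculus together with the fact that $p_m$ has real coefficients. The only point requiring a little care is the commutation step $p_m(J)E(U)=E(U)p_m(J)$, which I would justify either by noting that both sides agree on approximating linear combinations of projections $E(V)$ and passing to the limit, or simply by invoking the standard fact that $E$ is the spectral measure of any bounded Borel function of $J$ as well.
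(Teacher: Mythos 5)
Your proposal is correct and follows essentially the same route as the paper: an induction on the three-term recurrence (using $a_n>0$) to build the $p_n$, followed by moving $p_m(J)$ across the inner product and commuting it with $E(U)$ to reduce the matrix element to $\langle E(U)(p_np_m)(J)e_0,e_0\rangle=\int_U p_n(t)p_m(t)\,d\mu(t)$. You merely make explicit the details (the formula for $p_{n+1}$, the leading coefficient, and the functional-calculus identity) that the paper leaves as a sketch.
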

\begin{proof}
First, one checks by induction that $e_n=p_n(J)e_0$, $\forall n\in\mathbb{N}$ for some polynomials $p_n\in\mathbb{R}[X]$, thanks to $a_n>0$. \\
Then, $\langle E(U)e_n,e_m \rangle=\langle E(U)p_n(J)e_0,p_m(J)e_0\rangle=\langle p_m(J)E(U)p_n(J)e_0,e_0\rangle$. By Theorem \ref{orthop-1}, $J$ commutes with $E(U)$ and one can write 
\begin{equation}
\langle E(U)e_n,e_m \rangle=\langle p_m(J)p_n(J)E(U)e_0,e_0\rangle=\int_{U}p_n(t)p_m(t)d\mu(t).
\end{equation}
Hence the result.
\end{proof}

In physics oriented words, the above proposition states simply that the ''matrix elements'' of the spectral measure are given by the integrals of products of polynomials with respect to some positive compactly supported Borel measure. \\
From the above Proposition, it should be noted at this level that the spectral measure is entirely determined by $\mu$. In fact, the bridge between families of orthogonal polynomials and bounded Jacobi operators is provided by a corollary of the spectral theorem, known as the Favard theorem whose version for bounded operators will be given in a while. The following standard proposition set up the framework for the relevant orthogonal polynomials and associated 3-term recurrence relation.
\begin{proposition}\label{recur-poly}
Let $\{P_n\}_{n\in\mathbb{N}}$ be a family of polynomials of $\mathbb{R}[X]$, orthonormal in $L^2(\mu)$ with respect to a compactly supported measure $\mu$. There exists bounded sequences $\{a_n\}_{n\in\mathbb{N}}\subset\mathbb{R}^+$, $\{b_n\}_{n\in\mathbb{N}}\subset\mathbb{R}$ satisfying
\begin{eqnarray}
tP_n(t)&=&a_nP_{n+1}(t)+b_nP_n(t)+a_{n-1}P_{n-1}(t),\ k\ge1\nonumber\\
tP_0(t)&=&a_0P_1(t)+b_0P_0(t)\label{recurencepolygen}.
\end{eqnarray}
\end{proposition}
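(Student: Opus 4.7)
My plan rests on two elementary facts: since $\deg P_n=n$, the family $\{P_0,\dots,P_n\}$ is a basis of the space of polynomials of degree at most $n$ inside $L^2(\mu)$; and multiplication by $t$, namely $M_t:f(t)\mapsto tf(t)$, defines a bounded self-adjoint operator on $L^2(\mu)$, bounded because $\mu$ has compact support (say included in $[-R,R]$) and self-adjoint because $t$ is real on the support of $\mu$.

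First I would expand $tP_n$ in the orthonormal basis. Since $tP_n$ has degree $n+1$, I may write $tP_n=\sum_{k=0}^{n+1}c_{n,k}P_k$ with $c_{n,k}=\langle tP_n,P_k\rangle$. Invoking self-adjointness of $M_t$ I rewrite this as $c_{n,k}=\langle P_n,tP_k\rangle$; for $k\le n-2$ the polynomial $tP_k$ has degree at most $n-1$, hence belongs to the span of $P_0,\dots,P_{n-1}$ and is orthogonal to $P_n$, so $c_{n,k}=0$. Only three coefficients survive and the expansion takes exactly the three-term form of \eqref{recurencepolygen}.

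Next, the symmetric appearance of $a_{n-1}$ on either side of the index is a direct consequence of the same self-adjointness argument: $c_{n,n-1}=\langle tP_n,P_{n-1}\rangle=\langle P_n,tP_{n-1}\rangle=c_{n-1,n}$, so setting $a_n:=c_{n,n+1}$ and $b_n:=c_{n,n}$ automatically yields the symmetric recurrence. Reality of $b_n$ is immediate since the $P_n$ are real and $\mu$ is a real measure. Positivity $a_n\in\mathbb{R}^+$ requires a sign choice: adopting the standard convention that each $P_n$ has positive leading coefficient (which costs only replacing $P_n$ by $\pm P_n$ and does not affect orthonormality), $a_n$ coincides with the ratio of the leading coefficient of $P_n$ to that of $P_{n+1}$, hence is strictly positive.

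Boundedness of both sequences finally follows at once from the compactness of the support of $\mu$: Cauchy--Schwarz applied to $a_n=\langle M_tP_n,P_{n+1}\rangle$ and $b_n=\langle M_tP_n,P_n\rangle$ gives $|a_n|,|b_n|\le\|M_t\|\le R$. I do not expect any serious obstacle; the only delicate issue is the sign convention fixing the positivity of $a_n$, which is essentially cosmetic, and the compact-support hypothesis is exactly what ensures the boundedness claimed in the statement.
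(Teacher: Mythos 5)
Your proposal is correct and follows essentially the same route as the paper: expand $tP_n$ over the orthonormal family, kill the coefficients of index $\le n-2$ by moving $t$ onto the other factor and counting degrees, get positivity of $a_n$ from the leading coefficients, and deduce boundedness of $\{a_n\}$, $\{b_n\}$ from the compact support of $\mu$ via Cauchy--Schwarz/H\"older. Your explicit remarks on the sign convention for the $P_n$ and on the symmetry $c_{n,n-1}=c_{n-1,n}$ merely spell out what the paper leaves implicit.
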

\begin{proof}
The fact that $\{P_n\}_{n\in\mathbb{N}}$ is an orthonormal family yields immediately $tP_n(t)=\sum_{p=0}^{n+1}\alpha_pP_p(t)$ with the only non-vanishing $\alpha_p$'s given by $\alpha_{n\pm1}=\int_\mathbb{R}tP_{n\pm1}(t)P_{n}(t)d\mu(t)$, $\alpha_n=\int_{\mathbb{R}}tP_n^2(t)d\mu(t)$. From this follows the recurrence \eqref{recurencepolygen} with $a_n=\alpha_{n+1}$, $b_n=\alpha_n$, $a_{n-1}=\alpha_{n-1}$ while the positivity of $a_n$ stems from the leading coefficient. \\
Next, one has 
\begin{equation}
|b_n|\le\int_\mathbb{R}|tP_n^2(t)|d\mu(t)\left(\sup_{t\in supp(\mu)}|t|\right)\times||P_n||^2_{L^2}\label{bnd1}. 
\end{equation}
On the other hand 
\begin{equation}
|a_n|\left(\sup_{t\in supp(\mu)}|t|\right)\int_\mathbb{R}|P_{n+1}(t)||P_n(t)|d\mu(t)\left(\sup_{t\in supp(\mu)}|t|\right)\times||P_{n+1}||_{L^2} ||P_n||_{L^2},\label{bnd2}
\end{equation}
where the last inequality stems from the use of H\"older inequality. From \eqref{bnd1} and \eqref{bnd2}, one concludes that $\{a_n\}_{n\in\mathbb{N}}$ and $\{b_n\}_{n\in\mathbb{N}}$ are bounded.
\end{proof}
\begin{remark} \ \\[-16pt]
\begin{itemize}
\item i) Notice that the boundedness of the sequences $\{a_n\}$ and $\{b_n\}$ explicitely uses the fact that the support of $\mu$, $supp(\mu)$ is compact. This corresponds to some specific families of orthogonal polynomials. This will have to be reconsidered in the case of unbounded Jacobi operators, therefore corresponding to other families of orthogonal polynomials.
\item ii) Notice also the similarity between \eqref{recurencepolygen} and \eqref{jacoboperator} which signals a link between those Jacobi operators and families of orthogonal polynomials determined by 2 sequences of real numbers as given above. This observation will be strengthened by the Favard theorem for bounded Jacobi operators, given below.
\end{itemize}
\end{remark}

\begin{theorem} [Favard Theorem]\label{Favard}
To each bounded Jacobi operator $J$ is associated a unique compactly supported measure $\mu$ satisfying:
\begin{itemize}
\vspace*{-4pt}
\setlength{\itemsep}{-1pt}
\item i) Orthonormal set:  For any polynomials $P\in\mathbb{R}[X]$, the map ${\cal{U}}:\ell^2(\mathbb{N})\to L^2(\mu)$ defined by ${\cal{U}}:P(J)e_0\mapsto P$ extends to a unitary operator and $\{P_n:={\cal{U}}e_n\}_{n\in\mathbb{N}}$ is an orthonormal family with respect to $\mu$.
\item ii) Intertwiner: Let $L:L^2(\mu)\to L^2(\mu)$ be the left multiplication operator by $t\in\mathbb{R}$. One has ${\cal{U}}J=L\cal{U}$.
\end{itemize}
\end{theorem}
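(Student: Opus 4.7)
The plan is to build $\mathcal{U}$ explicitly on the polynomial orbit of the cyclic vector $e_0$ and then extend it by density. The measure $\mu$ itself is already in hand from Proposition \ref{measure-proba}, and its support lies inside the compact set $[-\|J\|,\|J\|]$ since the spectral resolution $E$ of the bounded self-adjoint $J$ is supported there. Because $e_0$ is cyclic for $J$ (observed in the paragraph preceding Proposition \ref{measure-proba}, and directly implied by Proposition \ref{matrixelem-spectral-meas}), the polynomial orbit $\mathcal{D}_0 := \{P(J)e_0 : P \in \mathbb{R}[X]\}$ is a dense subspace of $\ell^2(\mathbb{N})$; on $\mathcal{D}_0$ I define $\mathcal{U}_0 : P(J) e_0 \mapsto P$.

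To verify that $\mathcal{U}_0$ is a well-defined isometry I apply the functional calculus for $J$ together with Proposition \ref{matrixelem-spectral-meas} (with $U=\mathbb{R}$) to obtain
\begin{equation*}
\langle P(J) e_0,\, Q(J) e_0 \rangle \;=\; \langle (\overline{Q} P)(J) e_0,\, e_0 \rangle \;=\; \int_{\mathbb{R}} P(t)\, \overline{Q(t)}\, d\mu(t) \;=\; \langle P, Q \rangle_{L^2(\mu)}.
\end{equation*}
This shows both that $P(J)e_0 = 0$ forces $P=0$ in $L^2(\mu)$ and that $\mathcal{U}_0$ preserves inner products, so it extends uniquely to a linear isometry $\mathcal{U} : \ell^2(\mathbb{N}) \to L^2(\mu)$. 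The main obstacle is then the surjectivity of $\mathcal{U}$: this amounts to showing that $\mathbb{R}[X]$ is dense in $L^2(\mu)$, which I would obtain by combining Stone--Weierstrass (uniform density of polynomials in $C(\mathrm{supp}(\mu))$) with the density of continuous functions in $L^2$ of a finite Borel measure on a compact set. This is precisely where the boundedness hypothesis enters in an essential way; for unbounded $J$ this step can fail and leads into the (indeterminate) moment problem. Once $\mathcal{U}$ is unitary, orthonormality of $\{P_n := \mathcal{U} e_n\}$ in $L^2(\mu)$ is automatic, and that each $P_n$ is a polynomial of degree $n$ is exactly the content of Proposition \ref{matrixelem-spectral-meas}.

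For the intertwining (ii), a direct computation on $\mathcal{D}_0$ gives $\mathcal{U}(J \cdot P(J) e_0) = \mathcal{U}((tP)(J) e_0) = t\, P(t) = L(\mathcal{U}(P(J) e_0))$; since $J$ is bounded on $\ell^2(\mathbb{N})$ and $L$ is bounded on $L^2(\mu)$ (because $\mathrm{supp}(\mu)$ is compact), the identity $\mathcal{U} J = L \mathcal{U}$ extends by continuity to all of $\ell^2(\mathbb{N})$. Uniqueness of $\mu$ finally follows from its moments: any $\mu'$ satisfying (i) and (ii) forces $\int t^n d\mu'(t) = \langle L^n 1, 1 \rangle_{L^2(\mu')} = \langle J^n e_0, e_0 \rangle$ for every $n$, matching the $n$-th moment of $\mu$; since $\mu$ is compactly supported, the Hausdorff moment problem is determinate and $\mu' = \mu$.
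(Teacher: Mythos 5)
Your proposal is correct and follows essentially the same route as the paper: both define $\mathcal{U}$ on the polynomial orbit of the cyclic vector $e_0$, establish the isometry via $\langle P(J)e_0,Q(J)e_0\rangle=\int PQ\,d\mu$ from Proposition \ref{matrixelem-spectral-meas}, use compactness of $\mathrm{supp}(\mu)$ for density of polynomials in $L^2(\mu)$ and hence unitarity, verify the intertwining relation directly, and reduce uniqueness to the fact that a compactly supported measure is determined by its moments. The only cosmetic difference is that the paper proves this last determinacy fact explicitly via the Stieltjes transform and the Stieltjes--Perron inversion formula, whereas you cite the determinate Hausdorff moment problem; you are also somewhat more careful about well-definedness of $\mathcal{U}$ on the orbit, which the paper glosses over.
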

\begin{proof}
First, one has for any $P_1, P_2\in\mathbb{R}[X]$ 
\begin{equation}
\langle P_1(J)e_0, P_2(J)e_0\rangle=\langle P_2^\dag(J)P_1(J)e_0, e_0\rangle=\int_\mathbb{R}P_1(t)P_2(t)d\mu(t),
\end{equation}
where the last equality stems from Proposition \ref{matrixelem-spectral-meas} and we use $E(\mathbb{R})=\bbone$. Now 
\begin{equation}
\int_\mathbb{R}P_1(t)P_2(t)d\mu(t)=\langle P_1P_2\rangle_{L^2}= \langle {\cal{U}}P_1e_0,{\cal{U}}P_2e_0\rangle_{L^2},
\end{equation}
where the definition of ${\cal{U}}$ has been used to obtain the last equality. Hence ${\cal{U}}^\dag{\cal{U}}={\cal{U}}{\cal{U}}^\dag=\bbone$. Next, from the beginning of the proof of Proposition \ref{matrixelem-spectral-meas}, $e_n=P_n(J)e_0$ implies that the subspace $\{J^ne_0,\ n\in\mathbb{N}\}$ of $\ell^2(\mathbb{N})$ on which ${\cal{U}}$ acts is dense in $\ell^2(\mathbb{N})$ while the image of ${\cal{U}}$ is dense in $\mathbb{R}[X]$ because $supp(\mu)$ is compact. Hence ${\cal{U}}$ extends to a unitary operator ${\cal{U}}:\ell^2(\mathbb{N})\to L^2(\mu)$. Finally, simply write 
\begin{equation}
\int_\mathbb{R}P_n(t)P_m(t)d\mu(t)=\langle {\cal{U}}e_n,{\cal{U}}e_m\rangle_{L^2}=\langle e_n,e_m\rangle=\delta_{nm},
\end{equation}
where we used the fact that ${\cal{U}}$ is unitary and i) is proven. \\
Now, from a standard computation using ${\cal{U}}Je_n=L{\cal{U}}e_n$ for any $n\in\mathbb{N}$ together with Propositions \ref{matrixelem-spectral-meas}, \ref{recur-poly} and Definition \ref{jacoboperator}, one obtains easily 
\begin{equation}
a_n=\langle Je_n,e_{n+1} \rangle=\int_\mathbb{R}tP_n(t)P_{n+1}(t)d\mu(t),\ b_n=\langle Je_n,e_n\rangle=\int_\mathbb{R}tP_n^2(t)d\mu(t). 
\end{equation}
Hence, ii) holds true. \\
Finally, let $s_n:=\int_\mathbb{R}x^nd\mu(x)$, $n\in\mathbb{N}$, be the moments of $\mu$ with the corresponding Stieltjes transform $w(z)=\int_\mathbb{R}\frac{1}{x-z}d\mu(x)$, for any $z\in\mathbb{C}\backslash\mathbb{R}$. Using the fact that $\mu$ is compactly supported, $supp(\mu)=[-\sigma,\sigma]$, one can write $w(z)=-\frac{1}{z}\sum_{n=0}^\infty\int_\mathbb{R}(\frac{x}{z})^n=-\sum_{n=0}^\infty\frac{s_n}{z^{n+1}}$ where the RHS is absolutely convergent whenever $|z|>\sigma$. Then, further using the formula 
\begin{equation}
\lim_{\delta\to0}\pi^{-1}\int_a^b\Im[ w(t+i\delta)]dt=\mu([a,b])+\frac{1}{2}(\mu(a)+\mu(b)), 
\end{equation}
one verifies that $\mu$ is determined uniquely by its moments. The theorem is proven.
\end{proof}
\begin{remark}
i) From Proposition \ref{measure-proba} and \ref{matrixelem-spectral-meas}, one infers $\langle E(U)e_0,e_0 \rangle=\int_UP_0(t)P_0(t)d\mu(t)$. By choosing $U=\mathbb{R}$, one obtains $\int_\mathbb{R}d\mu(t)=1$ provided $P_0=1$. Therefore, this last normalization condition, combined to the 3-term recurrence \eqref{recurencepolygen} of Proposition \ref{recur-poly} completely determines the set of polynomials $P_n$ and ensures that $\mu$ is a probability measure.\\
ii) The Favard theorem extends to the case of unbounded Jacobi operators, up to moderate modifications.\\
iii) We point out once more time that the above measure $\mu$ is entirely determined from the full sequence of its moments, $\{s_n:=\int_\mathbb{R}x^nd\mu(x) \}_{n\in\mathbb{N}}$. This deals with the so-called moment problem, another important block involved in the theory of orthogonal polynomials, which however is not directly used in the present paper. For more details, see e.g in \cite{akhiez:1965}, \cite{bsimon}.
\end{remark}
In the ensuing analysis, we will use at some places the Christoffel-Darboux relations. There are other algebraic formulas but we will not need them in this paper.
\begin{proposition}\label{chris-darb-rel}
Let $f^\prime$ denotes the derivative with respect to $x$. The following relation holds true: 
\begin{eqnarray}
(t-z)\sum_{k=0}^nP_k(t)P_k(z)&=&a_n(P_{n+1}(t)P_{n}(z)-P_{n}(t)P_{n+1}(z)),\label{christof-darb}\\
\sum_{k=0}^{N-1}(P_k(x))^2&=&P^\prime_N(x)P_{N-1}(x)-P^\prime_{N-1}(x)P_N(x)\label{christoff-darb2}.
\end{eqnarray}
\end{proposition}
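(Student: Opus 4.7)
The natural strategy is to derive both identities from the three--term recurrence of Proposition \ref{recur-poly}, which is the only algebraic structure the $P_n$'s are known to satisfy.

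For the first identity \eqref{christof-darb}, I would write the recurrence at the point $t$, multiply it by $P_k(z)$, then write it at the point $z$, multiply by $P_k(t)$, and subtract the two. The diagonal terms $b_k P_k(t) P_k(z)$ cancel, and what remains on the right-hand side has a nice telescoping structure of the form
\begin{equation*}
(t-z)P_k(t)P_k(z) = F_k(t,z) - F_{k-1}(t,z),
\end{equation*}
with $F_k(t,z) := a_k\bigl(P_{k+1}(t)P_k(z) - P_k(t)P_{k+1}(z)\bigr)$, using the fact that the coefficients $a_{k-1}$ appearing in the ''$P_k \, P_{k-1}$'' slot at level $k$ match the ones in the ''$P_{k+1} \, P_k$'' slot at level $k-1$. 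Summing from $k=0$ to $k=n$ telescopes to $F_n(t,z) - F_{-1}(t,z)$, with the convention $F_{-1}=0$ (or equivalently the boundary case $tP_0=a_0P_1+b_0P_0$ from \eqref{recurencepolygen}), which yields \eqref{christof-darb}. The only point requiring some care is checking the $k=0$ base case, where the recurrence has a different form; I would handle it separately and verify it matches the telescoping pattern.

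For the confluent identity \eqref{christoff-darb2}, the standard trick is to let $z \to t$ in \eqref{christof-darb}. The left-hand side tends to $\sum_{k=0}^{n}P_k(t)^2$, while the right-hand side is a $0/0$ indeterminate form to which L'H\^opital's rule applies: differentiating the numerator with respect to $z$ at $z=t$ produces
\begin{equation*}
\sum_{k=0}^{n}P_k(t)^2 \;=\; a_n\bigl(P_n(t)P'_{n+1}(t) - P_{n+1}(t)P'_n(t)\bigr),
\end{equation*}
which matches the structure of \eqref{christoff-darb2} after setting $N=n+1$ (the coefficient $a_n$ in front can be absorbed into an appropriate renormalization convention for the $P_n$'s, or left explicit).

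I do not anticipate any serious obstacle: both identities are pure consequences of the recurrence, so no analytic input (compactness of $\mathrm{supp}(\mu)$, boundedness of $J$, etc.) is needed here. The main care is bookkeeping in the telescoping sum and properly handling the boundary term at $k=0$, where the recurrence \eqref{recurencepolygen} is truncated; once that is correctly set up, everything collapses in one line.
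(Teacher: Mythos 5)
Your proof is correct and follows essentially the same route as the paper, whose own proof is the same one-line recipe: multiply the recurrence at $t$ by $P_k(z)$ and at $z$ by $P_k(t)$, subtract, telescope, and obtain the confluent form by the $z\to t$ limit. The factor $a_n$ you flag in the confluent identity is a genuine point: for the orthonormal normalization of Proposition \ref{recur-poly} the correct statement is $\sum_{k=0}^{N-1}(P_k(x))^2=a_{N-1}\bigl(P'_N(x)P_{N-1}(x)-P'_{N-1}(x)P_N(x)\bigr)$, so your version with the explicit coefficient is the accurate one and \eqref{christoff-darb2} as printed omits it.
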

\begin{proof}
This is routine computation. Multiply \eqref{recurencepolygen} by $P_{n}(z)$, then \eqref{recurencepolygen} in the $z$ variable by $P_n(t)$ and take the sum.
\end{proof}
Let us use a more physics oriented language for a while: In the following, the ''diagonalization'' of the self-adjoint (positive) kinetic operator of the considered NCFT will give rise to a 3-term recurrence equation satisfied by the ''matrix elements'' of a unitary (orthogonal) operator. This will therefore be related to a given family of orthogonal polynomials. \par  
Let us make a direct connection with NCFT for which the computation of the propagator may be a terrible task when carried out by brute force. But this computation can be actually simplified by using the following simple technical lemma which is a direct consequence of the Favard theorem together with the continuous functional calculus. It gives rise easily to explicit expression for the ''matrix elements'' of the propagator, once the family of orthogonal polynomials has been identified. Notice that it can be easily generalized to other situations (e.g $\ell^n(\mathbb{N})$ and/or unbounded self-adjoint Jacobi operators). Assume that the kinetic operator for a NCFT $J$ is a (bounded) Jacobi operator as in Definition \ref{defin-jacobi} with $J^{-1}$ such that $\{0\}\notin spec(J)$. According to the material we have presented above in this section, the following relations hold true for any $m,n\in\mathbb{N}$:
\begin{equation}
\langle e_m,J e_n \rangle=\int_{supp(\mu)}tP_m(t)P_n(t)d\mu(t),\label{formal-kinetic-operatoren}
\end{equation}
\begin{equation}
 \langle e_m,J^{-1} e_n\rangle=\int_{supp(\mu)}\frac{1}{t}P_m(t)P_n(t)d\mu(t)\label{formal-propagatoren}.
\end{equation}
where of course $\{P_n\}_{n\in\mathbb{N}}$ is the unique family of orthogonal polynomials associated to $J$. These relations are direct consequences of Definition \ref{defin-jacobi}, Proposition \ref{recur-poly} and e.g Proposition \ref{matrixelem-spectral-meas}. \par

To close this subsection, it is usefull to point out the relationship between the spectrum of a truncated $J$, i.e restricted to a $N\times N$ matrix, and the zeros of the related orthogonal polynomials. This will be helpfull in the process of characterization of the total spectrum of the kinetic operators, i.e point (discrete), continuous and residual spectra.
\begin{lemma}\label{lemma-spectral}
For a given $N\in\mathbb{N}$ let $J^N$ denotes the matrix of $\mathbb{M}_N(\mathbb{R})$ obtained from $J$ \eqref{jacoboperator} by restricting the indices $m,n,...$ to $0\le m,n,...\le N-1$. Then, the eigenvalues of $J^N$ are exactly given by the zeroes of the orthogonal polynomial $P_N$ in the orthogonal family defined by \eqref{recurencepolygen} related to $J$. The zeros of the orthogonal polynomials are simple and real.
\end{lemma}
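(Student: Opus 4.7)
The plan is to match the eigenvalue equation for the truncated Jacobi matrix $J^N$ with the three-term recurrence \eqref{recurencepolygen} that defines the polynomials $P_n$. Since $J^N$ is a real symmetric tridiagonal $N\times N$ matrix whose off-diagonal entries $a_m$ are strictly positive, standard linear algebra already guarantees that its $N$ eigenvalues are real; the work is to identify them with the zeros of $P_N$.

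I would write out $J^N v = \lambda v$ component by component for $v=(v_0,\ldots,v_{N-1})^T$. The equations for $0\le m\le N-2$ take the form $a_m v_{m+1}+b_m v_m+a_{m-1}v_{m-1}=\lambda v_m$ (with the obvious convention $a_{-1}v_{-1}=0$ in the $m=0$ row), which is literally the recurrence \eqref{recurencepolygen} evaluated at $t=\lambda$. By Proposition~\ref{recur-poly} together with the normalization $P_0=1$ of the Favard theorem, the unique solution of this recurrence with prescribed $v_0$ is $v_n = v_0\, P_n(\lambda)$ for $n=0,1,\ldots,N-1$. Conversely, the only freedom in any eigenvector of $J^N$ is the choice of $v_0$: if $v_0=0$ the positivity $a_m>0$ propagates $v_1=v_2=\cdots=0$, so every eigenspace of $J^N$ is one-dimensional.

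It remains to exploit the last row, namely $b_{N-1}v_{N-1}+a_{N-2}v_{N-2}=\lambda v_{N-1}$. Substituting $v_n=P_n(\lambda)$ and subtracting the recurrence \eqref{recurencepolygen} at $n=N-1$, the ``missing'' term $a_{N-1}P_N(\lambda)$ appears and the row becomes $a_{N-1}P_N(\lambda)=0$. Since $a_{N-1}>0$, this is equivalent to $P_N(\lambda)=0$. Hence $\lambda$ is an eigenvalue of $J^N$ if and only if $P_N(\lambda)=0$, with explicit eigenvector $(P_0(\lambda),P_1(\lambda),\ldots,P_{N-1}(\lambda))^T$.

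For simplicity and reality of the zeros, I would combine two observations. First, $J^N$ is real symmetric, so all its eigenvalues are real and its algebraic and geometric multiplicities coincide; second, the previous paragraph showed that each eigenspace is one-dimensional. Therefore the $N$ eigenvalues of $J^N$ are real and pairwise distinct, and by the equivalence just established the polynomial $P_N\in\mathbb{R}[X]$ of degree $N$ possesses $N$ simple real zeros, which exhausts its roots. No serious obstacle is expected: the only point that requires a touch of care is the identification of the last row of the eigenvalue equation with the recurrence at $n=N-1$ and the use of $a_{N-1}>0$ to conclude $P_N(\lambda)=0$, together with keeping the conventions at the boundary $m=0$ straight.
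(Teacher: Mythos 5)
Your proof is correct and follows essentially the same route as the paper: identifying the eigenvalue equation for $J^N$ with the three-term recurrence \eqref{recurencepolygen}, so that $(P_0(\lambda),\dots,P_{N-1}(\lambda))^T$ is the eigenvector and the last row forces $a_{N-1}P_N(\lambda)=0$. Your version is in fact slightly more complete than the paper's, which only exhibits the zeros of $P_N$ as eigenvalues and then asserts that a real symmetric tridiagonal matrix has simple eigenvalues, whereas you establish both inclusions and derive simplicity from the one-dimensionality of the eigenspaces, which genuinely uses $a_m>0$.
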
\label{lemma-diag-1}
\begin{proof}
For $m\le N$, the 3-term recurrence \eqref{recurencepolygen} can be easily cast into the matrix form 
\begin{equation}
t{\cal{P}}_N(t)=J^N{\cal{P}}_N(t)+a_NP_N(t)e_N,\ {\cal{P}}_N:=(P_0(t),P_1(t),...,P_{N-1}(t))\label{zero-eigen-jacob}.
\end{equation}
Whenever $t=t_0$, a zero of $P_N(t)$, \eqref{zero-eigen-jacob} becomes $t_0{\cal{P}}_N(t_0)=J^N{\cal{P}}_N(t_0)$. Now, $J^N$ is symmetric tridiagonal real matrix so it has real, simple eigenvalues. The lemma is proven.
\end{proof}
Many (but not all) properties presented here for the case of bounded Jacobi operators will still apply (up to minor adaptations) to the case of unbounded operators. Note by the way that the zeros of the orthogonal polynomials for bounded Jacobi all belong to $[-||J||,||J||]$ as a corollary of Lemma \ref{lemma-diag-1}.\\
The case of NCFT with unbounded Jacobi operators is slightly more complicated and will be presented elsewhere \cite{unboud-jac}. One difference comes from the ''large momentum'' behavior of the inverse of the kinetic operator, i.e the propagator, which decays in the latter case while it may remain non decaying in the bounded case, implying the occurrence of non zero correlation at large separation of indices in the language of matrix bases.\\
However (at least in the classes of noncommutative spaces we considered) the differences have a milder effect on the associated spectral triples (i.e basically the spectral triple with the square root of the kinetic operator as Dirac operator from which the NCFT action would derive as a spectral action). This comes mainly from the fact that the relevant non unital algebras are represented on a smooth (e.g Hilbert-Schmidt) subalgebra of some algebra of bounded operators. The latter representation enters explicitely the definition of some localized axioms for the non unital triple: functionals of the Dirac operator are (left or right) multiplied by smooth (e.g Hilbert-Schmidt) operator. This facilitates the obtention of compactness or summability conditions. In this respect, the bounded Jacobi operator case is not ''too far away'' from the unbounded case and is representative of common features. In this paper, the kinetic operator of the NCFT we consider is bounded Jacobi.\par


\section{\texorpdfstring{Gauge theories on $\mathbb{R}^2_\theta$ as matrix models}{Moyal 2-d}}\label{subsection22}


\subsection{Noncommutative gauge theories: A review on a nutshell}\label{section-review}

In this subsection, we will focus on some of the open problems so far left unsolved within noncommutative gauge theories built on what could be called informally  ''totally noncommutative geometries''. We will not consider the recent fascinating developments in gauge models on ''almost commutative geometries, in particular gauge models of Connes-Chamseddine-types that reproduce the main quantitative features of the Standard Model. See e.g 2nd of ref. \cite{Connes1}, \cite{ccs1}, \cite{ccs2}and references therein. \\
The NCFT studied in this paper is rigidely linked to a gauge theory on the noncommutative Moyal plane. It is therefore instructive to summarize the present situation for noncommutative gauge theories.\par
At the classical level, the construction of gauge invariant actions is not so difficult, once a differential calculus has been set up, together with a proper notion of noncommutative connection. This basically amounts to built a gauge invariant polynomial depending on some curvatures, usually a trace of products of curvatures. The use of spectral triples provides a natural way to construct noncommutative differential calculi as well as natural gauge invariant actions, once the spectral action principle is accepted. Another way is to use directly specific versions of the differential calculus based on derivations of the associative algebra related to the noncommutative space. This more algebraic way is often used in the mathematical physics literature as a flexible and fast tool to obtain expressions for noncommutative actions of NCFT. It extends easily to the construction of gauge invariant actions defined on ''noncommutative spaces'' and it appears that most of the developments in noncommutative gauge 
theories fit well within definite versions of derivation based differential calculi with the notion of noncommutative connection being a natural extension of the Koszul connection. This noncommutative differential calculus has been introduced a long ago in \cite{mdv88} (see also \cite{mdv99} and references therein). For further more recent developments and extensions see \cite{WAL1}, \cite{WAL2}, \cite{WAL3}.\par

Perturbative renormalisation of noncommutative gauge theories involves so far unsolved problems. Renormalisation of NCFT is in general more difficult than for commutative (local) field theories. This is often due to the UV/IR mixing. It occurs in the simplest noncommutative $\varphi^4$ model on the 4-d Moyal space, as noticed in \cite{Minwalla}, and destroys renormalisability. The phenomenon persists in noncommutative gauge models and is one of the main unsolved problems within Moyal NCFT. A 1st solution to this problem leading to renormalisable NCFT has been proposed in 2003 \cite{gw1}, \cite{gw2}. It is obtained by supplementing the scalar action with a harmonic oscillator term, leading to the now popular Grosse-Wulkenhaar model, whose 4-d version is likely non-perturbatively solvable, see \cite{harald-raimar}. The Grosse-Wulkenhaar model together with its gauge counterpart, to which we will turn on below, is actually related to an interesting NCG, called finite volume spectral triple and analyzed in \cite{
finite-vol}. It turns out the related noncommutative metric geometry is homothetic in a very precise sense to the usual noncommutative Moyal metric geometry, as shown in \cite{homot-moyal}.\par

At the present time, no (all orders) renormalisable gauge theory on 4-d Moyal space has been constructed, despite various encouraging progresses in that direction.
Inspired by the above Grosse-Wulkenhaar solution, a gauge model obtained either by heatkernel methods or by (equivalent) effective action computation, starting from of the Grosse-Wulkenhaar model coupled to a gauge field, gave rise to what is sometimes called the ''induced gauge theory'' on Moyal space \cite{GWW}, \cite{GW07}. The corresponding action is (the notations are by now standard):
\begin{align}
S=\int d^4x \Big(\frac{1}{4}F_{\mu\nu}\star F_{\mu\nu}
+\frac{\Omega^2}{4}\{{\cal{A}}_\mu,{\cal{A}}_\nu\}^2_\star
+{\kappa}{\cal{A}}_\mu\star{\cal{A}}_\mu\Big) \label{inducedgaugematrix}
\end{align}
in which  the 2nd and 3rd terms may be viewed as "gauge counterparts'' of the harmonic term of the Grosse-Wulkenhaar model, while the first term that looks like a Yang-Mills action has bad UV/IR mixing already showing up as a hard IR singularity in the vacuum polarization tensor. Here, ${\cal{A}}_\mu$ is the covariant coordinates, a natural gauge covariant tensor form stemming from the existence of a canonical gauge invariant connection in the present NC framework. We refer to \cite{WAL1}, \cite{WAL2} for a complete description of the underlying noncommutative differential calculus and relevant notion of noncommutative connection. Here, it is sufficient to note that the relevant gauge transformation are 
\begin{equation}
{\cal{A}}_\mu\rightarrow{\cal{A}}_\mu^g=g^\dag\star{\cal{A}}_\mu\star g,\ \forall g\in{\cal{U}}(\mathbb{R}^4_\theta)
\end{equation}
with similar covariant transformation for the curvature $F_{\mu\nu}$. Here $\mathbb{R}^4_\theta$ is identified in this algebraic framework to the multiplier algebra of the Schwartz functions algebra with associative Moyal product and ${\cal{U}}(\mathbb{R}^2_\theta)$ denotes the corresponding unitary elements. The structure of the (symmetric) vacua of \eqref{inducedgaugematrix} has been analyzed in detail and classified in \cite{GWW2}. It was realized from this work that the complicated vacuum structure thus exhibited forbids a direct perturbative treatment of the gauge action \eqref{inducedgaugematrix}. \par

Alternative way of introducing a Grosse-Wulkenhaar term within a gauge invariant theory has been also proposed. These are based on the implementation of a IR damping mechanism \cite{blaschk1}. In particular in the 1st ref. of \cite{blaschk1}, it has been claimed that renormalisability of a noncommutative gauge-invariant model could be restored by building an IR damping into the propagator for the gauge potential propagator together with nonlocal counterterms related to the (quadratic and linear) IR singularities triggering the UV/IR mixing, akin to the ''soft-breaking'' terms of the Gribov-Zwanziger action. In \cite{bgw-13}, Slavnov-Taylor identities have been used to track the IR singularities in gauge models of the above type, showing basically that the needed independent Gribov-type parameters reduce in fact to one. This damping approach is appealing. However, interpreting the action within the framework of a definite noncommutative differential geometry is unclear at the present time, unlike the case of 
the induced gauge action.\par 

We mention finally that modifications of the noncommutative differential calculus underlying the action \eqref{inducedgaugematrix} have been investigated in order to tackle the UV/IR mixing, resulting in modifications of \eqref{inducedgaugematrix} \cite{WAL1,WAL2}. Although interesting relationships have been shown between these modifications and the Grosse-Wulkenhaar model, no improvement of the UV/IR mixing behavior has been obtained. Notice that the inclusion of fermions is expected to improve the situation. 

Another appealing (in some sense dual) approach is provided by the matrix model formulation of noncommutative gauge theory has also been investigated, initiated a long ago in \cite{matrix1} in the context of type IIB (stringy) matrix models. This basically amounts to re-interpret the noncommutative gauge theories as matrix models taking advantage of the relationship between the gauge potential and the covariant coordinate mentionned above, exhibiting in some cases a relationship with the action \eqref{inducedgaugematrix}. For recent exhaustive reviews, see \cite{matrix2} (see also \cite{matrix3}). Note that the matrix model approach may in some cases allow one to go beyond the perturbative approach \cite{matrix4}. One interesting outcome is that it may provide a interpretation for the UV/IR mixing in some noncommutative gauge theories in terms of an induced gravity action. See e.g \cite{matrix5}. \par

Motivated by this alternative approach, the induced gauge theory action \eqref{inducedgaugematrix} on the Moyal plane, expanded around a particular symmetric vacuum, has been formulated as a matrix model and its 1-loop behavior has been studied \cite{MVW13}. This analysis provides the first investigation of the quantum properties of \eqref{inducedgaugematrix}, showing by the way the usefulness of the matrix model formulation to overcome the long standing difficulty linked to the vacuum structure. One of the result is the appearance of a non vanishing 1-point (tadpole) function indicating a quantum vacuum instability. Regardless its gauge theory origin, this model is essentially a NCFT with a particular bounded Jacobi operator as kinetic operator and polynomial interactions after suitable gauge fixing and is actually representative of most of the main features of any other NCFT with other bounded Jacobi kinetic operator. Note also that some (but not all!) features of this bounded case will be somewhat similar 
to the unbounded case. This is why we have chosen to investigate deeply the above quantum matrix gauge theory on the Moyal plane. This will be done in the sequel. Notice by the way that this gauge model is related to the 6-vertex model (to which it reduces whenever the vacuum is the trivial one \cite{MVW13}).\par 

More recently, we have investigated gauge theories on $\mathbb{R}^3_\lambda$ \cite{gvw13}. While non-vanishing 1-point function also occurs at the one-loop order, deserving further investigation, it appears that the UV behavior as well as the IR one are milder than what one would obtain for a NCFT on (4-d) Moyal space. This behavior was also observed in the families of scalar NCFT on $\mathbb{R}^3_\lambda$ considered in \cite{vit-wal-12}, among which classes of finite NCFT have been identified, leading to the conclusion that no perturbatively dangerous UV/IR mixing should occur within these scalar NCFT. One could object that it just comes from dimensional effect as it happens for commutative field theories and/or that it reflects a particular choice of the differential calculus. This is not correct. In fact, a closer inspection of inspection of $\mathbb{R}^3_\lambda$ shows that it has a natural decomposition as $\mathbb{R}^3_\lambda\simeq\oplus_{j\in\frac{\mathbb{N}}{2}}\mathbb{M}_{2j+1}(\mathbb{C})$, 
reflecting the fact that $\mathbb{R}^3_\lambda$  is a closure of $U(su(2))$, the universal envelopping algebra of $su(2)$. This yields one-loop amplitudes in which $j$, the radius of the fuzzy sphere $\mathbb{S}_j=\mathbb{M}_{2j+1}(\mathbb{C})$ in the physicists language, plays the role of a natural UV cut-off. Two important comments are in order. First, the models considered in \cite{vit-wal-12} and \cite{gvw13} are {\it{not}} models on fuzzy sphere. Only suitable truncations (reduction in the physicists language) of these models to a single fuzzy sphere are! Next, the noncommutative differential calculus used in \cite{vit-wal-12} and \cite{gvw13}, although perfectly admissible, is not the most natural one. In fact, there is a natural action of $\mathbb{C}(SU(2))$ onto $\mathbb{R}^3_\lambda$, stemming from the duality pairing between $U(su(2))$ and $\mathbb{C}(SU(2))$. This yields a bicovariant differential calculus on $\mathbb{R}^3_\lambda$ with a Laplacian whose commutative limit is the usual Laplacian on 
$\mathbb{R}^3$. These observations identify ways of future investigations. It should be interesting to see in what extend the above mild UV and IR behavior is affected by the change of differential calculus and to see how this behavior survives at higher orders in perturbation. It would then favor an interesting more general class of noncommutative spaces leading to nicely behaving NCFT, to which $\mathbb{R}^3_\lambda$ belongs as an element. Notice that the gauge theory on $\mathbb{R}^3_\lambda$ studied in \cite{gvw13} when reduced to a single fuzzy sphere in $\mathbb{R}^3_\lambda$, can be identified with the Alekseeev-Recknagel-Schomerus action \cite{ARS} describing some low energy action for brane dynamics on $\mathbb{S}^3$.


\subsection{\texorpdfstring{Quantum matrix gauge theory on $\mathbb{R}^2_\theta$}{Moyal 2-d}}

We begin this subsection by synthesizing the properties of the Moyal plane that are actually needed in the subsequent analysis. There is a huge literature on Moyal spaces. For a complete presentation, see in e.g \cite{GBVF} and references therein.\\
Informally, the Moyal plane can be viewed as a strict quantization of $\mathbb{R}^2$ to the Poisson bracket on $C_0(\mathbb{R}^2)$. Behind the common vocable ''Moyal algebra'', there are various algebras depending on the required properties of regularity and/or differentiability. To deal with NCFT, it is especially convenient to view the relevant Moyal algebra as already represented on a suitable $C^*$-algebra of bounded operators on some Hilbert space. This is the viewpoint we adopt here, which permits one to avoid cumbersome manipulations of integrals of star-products arising when working with (Frechet) algebras of functions. The simplification is especially efficient in the case of the noncommutative space $\mathbb{R}^3_\lambda$ for which the measure in the relevant integral is not the usual Lebesgues measure on $\mathbb{R}^3_\lambda$, but has a complicated expression.\par

Let ${\cal{S}}$ be the space of $\mathbb{C}$-valued Schwartz functions on $\mathbb{R}^2$ and $\{f_{mn}(x) \}_{m,n\in\mathbb{N}}\subset{\cal{S}}\subset L^2(\mathbb{R}^2)$ be the family of Wigner transition eigenfunctions of the 1-d harmonic oscillator, with ($\theta>0$)
\begin{equation}
f_{mn}\star f_{kl}=\delta_{nk}f_{ml},\ f_{mn}^\dag=f_{nm},\ \langle f_{mn},f_{kl} \rangle_{L^2}=2\pi\theta\delta_{mk}\delta_{nl}\label{matrix-base-basic},
\end{equation}
where $\star$ is the Moyal product whose explicit expression will not be needed here. Let $(e_n)_{n\in\mathbb{N}}$ be the canonical basis of $\ell^2(\mathbb{N})$, see subsection \ref{subsection21}. First, define the following Frechet $*$-algebras
\begin{eqnarray}
{\mathbb{A}}:&=&\left\{\phi\in({\cal{S}},\star),\ \rho_{\alpha,\beta}(\phi)=\sup_{x\in\mathbb{R}^2}|x_1^{\alpha_1}x_2^{\alpha_2}\partial_1^{\beta_1}\partial_2^{\beta_2}\phi|<\infty,\ \forall \alpha_i,\beta_i\in\mathbb{R}^+ \right\}\label{alg1}\\
{\cal{M}}_\theta:&=&\left\{(\Phi_{m,n})_{m,n\in\mathbb{N}},\ \rho_n(\Phi)=\sum_{p,k}\theta(p+\frac{1}{2})^n(k+\frac{1}{2})^n|\Phi_{pk}|^2<\infty,\ \forall n\in\mathbb{N}\right\}\label{alg2},
\end{eqnarray}
i.e \eqref{alg2} is the Frechet subalgebra of $\ell^2(\mathbb{N}^2)$ involving the rapid decay matrices (equipped with matrix product).\par
It is known that ${\mathbb{A}}$ is isomorphic to ${\cal{M}}_\theta$ as Frechet $*$-algebras, with isomorphism defined by $\Phi_{mn}\mapsto \sum_{m,n}\Phi_{mn}f_{mn}\in{\cal{S}}$ and inverse $\phi\in{\cal{S}}\mapsto\frac{1}{2\pi\theta}\langle \Phi,f_{mn}\rangle_{L^2}$. Next, ${\cal{M}}_\theta$ is represented on a subalgebra of the $C^*$-algebra  ${\cal{B}}(\ell^2(\mathbb{N}))$ by using the natural representation of matrices as operators on $\ell^2(\mathbb{N})$, denoted by $\eta$. This is simply the product of a matrix by a column vector. Indeed, for any $\Phi\in{\cal{M}}_\theta$, $\Phi=\sum_{m,n}\Phi_{mn}e_m\otimes e_n$, define
\begin{equation}
\eta:\ell^2(\mathbb{N})\otimes\ell^2(\mathbb{N})\to{\cal{B}}(\ell^2(\mathbb{N})),\ \eta(e_m\otimes e_n)=e_m\otimes e^*_n,\ \forall m,n\in\mathbb{N}\nonumber
\end{equation}
with dual $e^*_n$ such that $e^*_n(e_p)=\delta_{np}$. In order to make contact with the physicists notations, we will set $\eta(e_m\otimes e_n)=e_m\otimes e^*_n:=|m\rangle\langle n|=f_{mn}$ in a while. \par
The definition of ${\cal{M}}_\theta$ \eqref{alg2} implies that the operator $\eta(\Phi)$ is Hilbert-Schmidt (set $n=0$ in the seminorm $\rho_n(\Phi)$). Hence $\eta(\Phi)$ is compact. Thus $\eta({\cal{M}}_\theta)$ inherits a $C^*$-norm from its Hilbert-Schmidt action on $\ell^2(\mathbb{N})$ (and so does ${\cal{M}}_\theta$). One concludes that the closure of $\eta({\cal{M}}_\theta)$ is such that $\overline{\eta({\cal{M}}_\theta)}\subseteq{\cal{K}}(\ell^2(\mathbb{N}))$ where ${\cal{K}}(\ell^2(\mathbb{N}))$ is the $C^*$-subalgebra of compact operators on $\ell^2(\mathbb{N})$. But since $\eta$ is faithfull, one has the isomorphism $\eta({\cal{M}}_\theta)\simeq{\cal{M}}_\theta$. Hence $\overline{{\cal{M}}_\theta}\subset{\cal{K}}(\ell^2(\mathbb{N}))$. On the other hand, $\overline{{\cal{M}}_\theta}\supseteq{\cal{K}}(\ell^2(\mathbb{N}))$ since ${\cal{M}}_\theta\supseteq\bigcup_{n=1}^\infty{\mathbb{M}}_n(\mathbb{C})$ holds true. Thus $\overline{{\cal{M}}_\theta}={\cal{K}}(\ell^2(\mathbb{N}))$. Finally, the map $U:L^2(
\mathbb{R}^2)\to\ell^2(\mathbb{N})\otimes\ell^2(\mathbb{N})$ defined by $U:f_{mn}\mapsto {\sqrt{2\pi\theta}}e_m\otimes e_n$ for any $m,n\in\mathbb{N}$ is an isometry. Hence $\overline{{\mathbb{A}}}\simeq{\cal{K}}(\ell^2(\mathbb{N}))$. Thus, we are dealing with compact operators at the end of the day.\par

In the following, it is understood that the algebra ${\mathbb{A}}$ is already represented on ${\cal{B}}(\ell^2(\mathbb{N}))$ as we have explained above. The resulting algebra will be sometimes denoted by $\mathbb{R}^2_\theta$ for short. According to the above discussion, we drop the $\eta$ symbol everywhere and denote any element in $\mathbb{R}^2_\theta$ (that corresponds actually to a field in the NCFT framework) as $\Phi=\sum_{m,n\in\mathbb{N}}\Phi_{mn}|m\rangle\langle n|$, where the $|m\rangle\langle n|$'s correspond to the representation of the $f_{mn}$'s on ${\cal{B}}(\ell^2(\mathbb{N}))$. To simplify the notations, we will from now on set $|m\rangle\langle n|=f_{mn}$ (i.e use the same notation for, says operators and corresponding symbols). Of course, the Moyal $\star$-product becomes formally a ''matrix product'' (since $\eta$ is a $*$-algebra morphism, $\eta(MN)=\eta(M)\eta(N)$) and the above relations for the $f_{mn}$'s still hold true among the represented $f_{mn}$ while 
\begin{equation}
\int_{\mathbb{R}^2}\to\tr. \nonumber
\end{equation}
Note that the Moyal product can be extended to larger algebras, by exploiting families of seminorms, see e.g \cite{Gracia-Bondia:1987kw} for an extensive discussion.\par

In the following, we use the covariant coordinate ${\cal{A}}_\mu$  defined above as the fundamental field variable in the action, i.e $S=S[{\cal{A}}_\mu]$ which amounts to interpret the induced gauge theory \eqref{inducedgaugematrix} as a matrix model. This latter is invariant under the gauge transformations ${\cal{A}}_\mu\rightarrow{\cal{A}}_\mu^g=g^\dag\star{\cal{A}}_\mu\star g,\ \forall g\in{\cal{U}}(\mathbb{R}^2_\theta)$. Notice by the way the formal similarity between the 1st term of the action \eqref{inducedgaugematrix} at $\Omega=0$ and the bosonic part of the IKKT matrix model (see e.g \cite{blasch-stein-10} and references therein). The suitable differential calculus underlying the present functional action can be viewed as the simplest differential calculus \`a la Koszul generated by the 2 natural derivations of $\mathbb{R}^2_\theta$. This has been discussed and analyzed at length in \cite{WAL1,WAL2} to which we refer for details. The only tools we need for the moment are the gauge transformations 
of the covariant coordinates given above and general features of the natural basis $(f_{mn}),\ m,n\in\mathbb{N}$ of $\mathbb{R}^2_\theta$, summarized by \eqref{matrix-base-basic}. The 2-dimensional version of \eqref{inducedgaugematrix} is
\begin{equation}
S_\Omega[\mathcal{A}]=\tr\big((1+\Omega^2){\cal{A}}{\cal{A}^\dag}{\cal{A}}{\cal{A}}^\dag+(3\Omega^2-1){\cal{A}}{\cal{A}}{\cal{A}}^\dag{\cal{A}}^\dag+
2\kappa{\cal{A}}
{\cal{A}}^\dag\big)\label{classaction},
\end{equation}
where we define ${\cal{A}}={{{\cal{A}}_1+i{\cal{A}}_2}\over{\sqrt{2}}},\quad {{\cal{A}}}^\dag={{{\cal{A}}_1-i{\cal{A}}_2}\over{\sqrt{2}}}$.
Notice that the functional action \eqref{classaction} has some similarities with the 6-vertex model\footnote{This similarity has been pointed out to us by H. Steinacker. However, the actual analysis and the actual ''identity'' of the model depends on the choice of a particular vacuum around which the classical theory is expanded, as illustrated below.}. The corresponding equation of motion is 
\begin{equation}
(3\Omega^2-1)({\cal{A}}^\dag{\cal{A}}{\cal{A}}+{\cal{A}}{\cal{A}}{\cal{A}}^\dag)+2(1+\Omega^2){\cal{A}}{\cal{A}}^\dag{\cal{A}}+2\kappa{\cal{A}}=0\label{eqmotion}.
\end{equation}
We now have to choose a particular vacuum (the background), expand the action around it, fix the background symmetry of the expanded action. Focusing on vacuum configurations with the global symmetries of the classical action as in \cite{GWW2} amounts to look for solutions $Z$ of the equation of motion with the following expansion
\begin{equation}
Z=\sum_{m,n\in\mathbb{N}}Z_{mn}f_{mn},\ Z_{mn}=-ia_{m}\delta_{m+1,n}, \forall m,n\in\mathbb{N}\label{vacuummatrix}
\end{equation}
where  the sequence of complex numbers $\{a_m, m\in\mathbb{N}\}$ satisfies
\begin{equation}
a_m\bigg((3\Omega^2-1)(|a_{m+1}|^2+|a_{m-1}|^2)+2(1+\Omega^2)|a_m|^2+2\kappa  \bigg)=0,\label{recursive}
\end{equation}
obtained by plugging $Z$ \eqref{vacuummatrix} into \eqref{eqmotion}. The trivial vacuum solution is $a_m=0$. The corresponding model would then be (a version of) the 6-vertex. We turn now on the non trivial symmetric vacua classified in \cite{GWW2}. They are given, up to an unessential phase ($\xi_m$ in the notations of \cite{GWW2}), by $a_m=\sqrt{u_m}$ with
\begin{eqnarray}
0<\Omega^2<\frac{1}{3}&,&\ u^1_m=\alpha(r^m-r^{-m})-\frac{\kappa}{4\Omega^2}(1-r^{-m}),\ \alpha\ge0,\ r>1\label{vac11}\\
\Omega^2={{1}\over{3}}&,&\ \kappa<0,\ a_m={{1}\over{2}}{\sqrt{ -3\kappa}},\ \forall m\in\mathbb{N}\label{sol1}\\
\frac{1}{3}<\Omega^2<1&,&\ u^2_m=-\frac{\kappa}{4\Omega^2}(1-r^{-m}),\ \kappa\le0,\ r\le-1\label{vac12}\\
\Omega^2=1&,&\ u^3_m=-\frac{\kappa}{4}(1-(-1)^{-m})\label{vac13},\ \kappa\le0
\end{eqnarray}
where
\begin{equation}
r=\frac{1+\Omega^2+\sqrt{8\Omega^2(1-\Omega^2)}}{1-3\Omega^2}\label{err}.
\end{equation}
Within the present phase choice, $a_m\in\mathbb{R}$. Let us summarize the main steps of the derivation of the gauge fixed action performed in \cite{MVW13}. By setting formally ${\cal{A}}=Z+\phi$, ${\cal{A}}^\dag=Z^\dag+\phi^\dag$ into \eqref{classaction}, where $\phi$ can be interpreted as a fluctuation around $Z$, one easily obtains a functional action $S[\phi,\phi^\dag]$ invariant under a background transformation related to a nilpotent BRST-like operation, $\delta_Z$, with structure equations:
\begin{equation}
\delta_Z\phi=-i[Z+\phi,C],\ \delta_Z\phi^\dag=-i[Z^\dag+\phi^\dag,C],\ \delta_ZZ=0,\ \delta_ZC=iCC\label{brsbackgrd}.
\end{equation}
Here, $C$, ${\bar{C}}$ and $b$ are respectively the ghost, the antighost and the St\"uckelberg field with ghost number equal to $+1$, $-1$ and $0$. $\delta_Z$ acts as a graded derivation with grading equal to the sum of the degree of forms and ghost number (modulo 2) and $\delta_Z^2=0$. This background symmetry can be fixed by supplementing $S[\phi,\phi^\dag]$ with the gauge-fixing action
\begin{equation}
S_{GF}=\delta_Z\tr\ {\bar{C}}(\phi-\phi^\dag)=\tr\ \bigg( b(\phi-\phi^\dag)+i{\bar{C}}[Z-Z^\dag+\phi-\phi^\dag,C]\bigg)\label{gaugefixingphi},
\end{equation}
where $\delta_Z{\bar{C}}=b$, $\delta_Zb=0$. There are interesting underlying algebraic structures which have been investigated in \cite{rsw}. Upon integrating the $b$ field, the ghost fields decouple. Setting now $\phi=\sum_{m,n}\phi_{mn}f_{mn}$, the remaining (non-ghost) part of the gauge fixed action reduces to a functional of $\phi$ only given by
\begin{equation}
S[\phi]=\sum_{m,n,k,l\in\mathbb{N}} \phi_{mn}\phi_{kl}G_{mn;kl}+S_{int}\label{actmatrix}
\end{equation}
where the kinetic operator takes the complicated expression
\begin{eqnarray}
G_{mn;kl} &=&(1+5\Omega^2)\delta_{ml}\delta_{nk}(a_na_{n+1}+a_na_{n-1})\nonumber\\
&-&(3\Omega^2-1)(\delta_{ml}\delta_{n+1,k-1}a_na_{n+1}+\delta_{ml}\delta_{n-1,k+1}a_na_{n-1}
-2\delta_{m,l+1}\delta_{k+1,n}a_na_l)\nonumber \\
&-&(1+\Omega^2)(\delta_{k,n+1}\delta_{m,l+1}a_na_l+\delta_{n,k+1}\delta_{l,m+1}a_na_l)+2\kappa\delta_{ml}\delta_{nk}\label{kineticoperator1}
\end{eqnarray}
with the $a_m$'s given by any of the sequences defined in \eqref{vac11}-\eqref{err}. \par
The kinetic operator involves 2 different types of terms: i) terms proportional to $(3\Omega^2-1)$ which are non zero when $m+n=l+k\pm2$, ii) terms that are non zero when $m+n=l+k$ which is similar to the so called conservation law for indices of the Grosse-Wulkenhaar model. The cubic and quartic interaction terms are
\begin{equation}
S_{int}=8\Omega^2\sum_{m,p,q,r\in\mathbb{N}}i\phi_{pq}\phi_{qr}\phi_{mp}(a_r\delta_{m+1,r}-a_r\delta_{r+1,m})+
4\Omega^2\sum_{m,n,k,r\in\mathbb{N}}\phi_{mn}\phi_{nk}\phi_{kr}\phi_{rm}\label{interactmatrix}.
\end{equation}
At this point, one remark is in order. Eqn. \eqref{kineticoperator1} shows clearly that the kinetic operator, hence the dynamics coded by the model, depends essentially on the choosen vacuum.\par
From now on, we choose the vacuum given by \eqref{sol1}, namely: 
\begin{equation}
\Omega^2={{1}\over{3}},\ \kappa<0,\ a_m={{1}\over{2}}{\sqrt{ -3\kappa}},\ \forall m\in\mathbb{N}\nonumber.
\end{equation}
for which the analysis is slightly simpler. The kinetic operator \eqref{kineticoperator1} becomes
\begin{equation}
 G^{(1/3)}_{mn;kl}=(-\kappa)\big(2\delta_{ml}\delta_{nk} - \delta_{k,n+1}\delta_{m,l+1}-\delta_{n,k+1}\delta_{l,m+1}\big)\label{kinetic13},
\end{equation}
and fullfills
\begin{equation}
G^{(1/3)}_{mn;kl}\ne0\iff m+n=k+l\label{conservation}.
\end{equation}
Then, upon setting $n=\alpha-m$, $k=\alpha-l$, with $\alpha=m+n=k+l$ into \eqref{kinetic13}, the indice conservation law \eqref{conservation} leads to
\begin{equation}
G^{(1/3)}_{m,\alpha-m;\alpha-l,l}:=G^{\alpha}_{m,l}=\mu^2(2\delta_{ml}- \delta_{m,l+1}-\delta_{l,m+1}),\ \forall m,l\in\mathbb{N}\label{tridiagon}
\end{equation}
with $\mu^2:=-\kappa$, which is independent of $\alpha$. Some (spectral) properties of the kinetic operator defined by \eqref{tridiagon} can now be easily obtained by a mere application of the material developed in subsection \ref{subsection21}.
\begin{proposition}\label{kinetic-G}
Let $G_{nm}:=G^{\alpha}_{n,m}$ be the kinetic operator on $\ell^2(\mathbb{N})$ defined by \eqref{tridiagon}. The following properties hold true:
\begin{itemize}
\vspace*{-4pt}
\setlength{\itemsep}{-1pt}
\item i) The operator on $\ell^2(\mathbb{N})$ given by $(-G)_{nm}$ defines a bounded (thus self-adjoint) Jacobi operator on $\ell^2(\mathbb{N})$.
\item ii) The spectrum of the truncated kinetic operator $G^N_{nm}$, $0\le n,m\le N-1$ is given by 
\begin{equation}
spec(G^N)=\left\{\lambda^N_k:=2\mu^2\left(1-\cos({{(k+1)\pi}\over{N+1}})\right), \ k\in\{0,1,...,N-1\}, N\in\mathbb{N}^*\right\}.\label{spectrum1}
\end{equation}
\item iii) The spectrum of the kinetic operator $G$ is simple.
\end{itemize}
\end{proposition}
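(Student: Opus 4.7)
The plan is to get (i) directly from Definition \ref{defin-jacobi} and Propositions \ref{extend-self}, \ref{jaco-bounded}, to establish (ii) by explicitly diagonalising the real symmetric tridiagonal truncation $G^N$ with a sine ansatz (cross-checked against Lemma \ref{lemma-spectral}), and to obtain (iii) from the cyclicity of $e_0$ combined with Theorem \ref{Favard}.

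For (i), reading off \eqref{tridiagon} one sees that $(-G)_{nm}=-2\mu^2\delta_{nm}+\mu^2\delta_{n,m+1}+\mu^2\delta_{m,n+1}$ is of the form \eqref{jacoboperator} with the constant sequences $a_m=\mu^2>0$ and $b_m=-2\mu^2\in\mathbb{R}$, so $-G$ is Jacobi in the sense of Definition \ref{defin-jacobi}. Since $\sup_{m\in\mathbb{N}}(|a_m|+|b_m|)=3\mu^2<\infty$, Proposition \ref{jaco-bounded} ii) gives boundedness (in particular $\|G\|\le 6\mu^2$), and then Proposition \ref{extend-self} ii) yields self-adjointness of $-G$, hence of $G$, on all of $\ell^2(\mathbb{N})$.

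For (ii), the truncated matrix $G^N$ is a real symmetric tridiagonal Toeplitz matrix with $2\mu^2$ on the diagonal and $-\mu^2$ on the two off-diagonals. The eigenvalue equation $G^Nv=\lambda v$ reduces to the linear recursion $\mu^2(v_{k+1}+v_{k-1})=(2\mu^2-\lambda)v_k$ together with the truncation boundary conditions $v_{-1}=v_N=0$. Inserting the ansatz $v_k=\sin((k+1)\theta)$ satisfies the left boundary condition automatically and solves the recursion iff $2\cos\theta=(2\mu^2-\lambda)/\mu^2$; the right boundary condition $\sin((N+1)\theta)=0$ then forces $\theta=(k+1)\pi/(N+1)$ for $k=0,1,\ldots,N-1$, reproducing the $N$ pairwise distinct values announced in \eqref{spectrum1}. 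Since $G^N$ is real symmetric of size $N$ these exhaust its spectrum; Lemma \ref{lemma-spectral} applied to $-G$ independently guarantees that the zeros of the corresponding orthogonal polynomial $P_N$ are simple, so no eigenvalue is missed.

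For (iii), the induction argument underlying Proposition \ref{matrixelem-spectral-meas} (which uses precisely the positivity $a_m>0$) produces polynomials $p_n\in\mathbb{R}[X]$ with $e_n=p_n(-G)e_0$, showing that $\{(-G)^ne_0:n\in\mathbb{N}\}$ is dense in $\ell^2(\mathbb{N})$, i.e.\ $e_0$ is cyclic for $-G$. Theorem \ref{Favard} then provides a unitary $\mathcal{U}:\ell^2(\mathbb{N})\to L^2(\mu)$ intertwining $-G$ with multiplication by $t$ on $L^2(\mu)$, which is a prototypical simple-spectrum operator. Hence $-G$ has simple spectrum, and therefore so does $G$. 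The only mildly delicate point in the whole argument is the sign bookkeeping when transferring statements between $G$ and $-G$; the distinctness count in (ii) is what closes the computation cleanly.
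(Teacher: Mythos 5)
Your proof is correct, and while part (i) coincides with the paper's argument (identify $-G$ as the Jacobi operator with constant sequences $a_m=\mu^2$, $b_m=-2\mu^2$ and invoke Propositions \ref{extend-self} and \ref{jaco-bounded}), parts (ii) and (iii) take genuinely different routes. For (ii), the paper never diagonalises $G^N$ directly: it writes down the 3-term recurrence $P_{n+1}(t)+P_{n-1}(t)-2P_n(t)=tP_n(t)$ attached to $J=-G/\mu^2$ by Proposition \ref{recur-poly}, changes variables via $2x=2+t$ to recognise the Chebyshev polynomials of the second kind, and then uses Lemma \ref{lemma-spectral} to read off $spec(G^N)$ from the zeros $\cos\bigl((k+1)\pi/(N+1)\bigr)$ of $U_N$. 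Your sine ansatz $v_k=\sin((k+1)\theta)$ with boundary conditions $v_{-1}=v_N=0$ is the classical elementary diagonalisation of a symmetric tridiagonal Toeplitz matrix and is entirely self-contained (the $N$ pairwise distinct values already exhaust the spectrum, so the appeal to Lemma \ref{lemma-spectral} for simplicity of the zeros is not even needed); the two computations are of course linked by $U_p(\cos\theta)=\sin((p+1)\theta)/\sin\theta$. What the paper's route buys is the explicit identification of the orthogonal polynomial family, which is reused downstream (the eigenvectors of Lemma \ref{lemma-spectrum-full}, the propagator of Lemma \ref{propagator-spectform}); your route buys a shorter, assumption-free verification of \eqref{spectrum1}. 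For (iii), the paper merely cites a theorem of M.\ Stone (and states it in the converse direction to the one actually needed), whereas you give the substantive argument: $e_0$ is cyclic for $-G$ because $e_n=p_n(-G)e_0$, so Theorem \ref{Favard} unitarily intertwines $-G$ with multiplication by $t$ on $L^2(\mu)$, which has simple spectrum; this is more explicit and arguably preferable. Only a cosmetic remark: when you say $\{(-G)^ne_0\}$ is dense you of course mean its linear span, consistent with the paper's own usage.
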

\begin{proof}
Set $J:=-\frac{1}{\mu^2}G$. Using \eqref{tridiagon}, one has $(Je)_n=e_{n+1}+e_{n-1}-2e_n$, $(Je)_0=e_1-2e_0$. Then, simply use Definition \ref{defin-jacobi} and Propositions \ref{extend-self} and \ref{jaco-bounded} with $a_n=1$, $b_n=-2$ for any $n\in\mathbb{N}$ and i) is proven.\\
To prove ii), one first determine the associated family of orthogonal polynomials. Then, Favard Theorem \ref{Favard} and Proposition \ref{recur-poly} guaranty the 
existence of a family of orthogonal polynomials with respect to a unique compactly supported probability measure with 3-term recurrence given by
\begin{equation}
P_{n+1}(t)+P_{n-1}(t)-2P_n(t)=tP_n(t),\ P_1(t)-2P_0(t)=tP_0(t),\ P_0(t)=1.
\end{equation}
Now set $2x=2+t$ and $P_n(2x-2)=U_n(x)$. Then, one obtains
\begin{equation}
U_{n+1}(x)+U_{n-1}(x)=2xU_n(x),\ U_1(x)=2xU_0(x),\ U_{-1}=0,\ U_0(x)=1\label{chebysh2}
\end{equation}
which, using the Askey classification (see \cite{kks}), defines a 3-term recurrence for the Chebyschev polynomials of 2nd kind \cite{kks}. From Lemma \ref{lemma-diag-1}, it follows that the eigenvalues of $J^N$ obtained by truncating the indices $n,m,...$ as $0\le n,m,...\le N-1$ are exactly given by the zeros of the Chebyshev polynomials $U_N(x)$. These latter are given \cite{kks} by $z_k^N=\cos({{(k+1)\pi}\over{N+1}})$, $k=0,1,...,N-1$. Therefore, the zeros of $P_N(2x-2)$ are given by $(2z^k_N-1)$, $k=0,1,...,N-1$. This together with $J:=-\frac{1}{\mu^2}G^N$ yields \eqref{spectrum1} and ii) is proven. \\
Finally, iii) is guaranteed by a theorem due to M. Stone (see in \cite{akhiez:1965}) stating that any self-adjoint operator with simple spectrum is generated by a Jacobi operator of the type considered here. This completes the proof.
\end{proof}
\begin{remark}
Let us first collect some technical (albeit important) points. We recall that Chebyshev polynomials of second kind are given by
\begin{equation}
U_n(t):=(n+1){\mbox{$_2$F$_1$}(-n,\ n+2;\ {{3}\over{2}};\ {{1-t}\over{2}})}={{P_n^{{{1}\over{2}},{{1}\over{2}}}(t) }\over{P_n^{{{1}\over{2}},{{1}\over{2}}}(1) }}, \ \forall n\in\mathbb{N}\label{defcheb},
\end{equation}
where $_2$F$_1$ denotes the hypergeometric function and $P_n^{\alpha,\beta}(x)$ is a particular family of Jacobi polynomials. Recall that for any $N\in\mathbb{N}$, $U_N(t)$ has $N$ different simple roots in $[-1,1]$. The orthogonality relation among the Chebyshev polynomials $U_n$ is
\begin{equation}
\int_{-1}^1d\mu(x)\ U_m(x)U_n(x)={{\pi}\over{2}}\delta_{mn},\ d\mu(x)=dx{\sqrt{1-x^2}}\label{orthocontinucheb2}
\end{equation}
For more details, see e.g \cite{kks}. \\
To close this remark, we mention that the Jacobi operator considered here is not completely continuous (although it is of course continuous). Recall that a completely continuous bounded operator maps any weakly convergent sequence into a strongly convergent one. As far as bounded Jacobi operators are concerned, complete continuity is equivalent to the condition that the sequences $a_n$ and $b_n$ go to zero as $n\to\infty$. This condition is not fullfilled here.
\end{remark}
\begin{proposition}\label{g-positive}
The kinetic operator $G$ on $\ell^2(\mathbb{N})$ \eqref{tridiagon} is positive.
\end{proposition}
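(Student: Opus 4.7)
The cleanest route is to exhibit $G$ explicitly as a square of a bounded operator. Recall the shift operator $S:\ell^2(\mathbb{N})\to\ell^2(\mathbb{N})$ already introduced in the proof of Proposition \ref{jaco-bounded}, defined by $Se_m=e_{m+1}$. Its adjoint acts as $S^\dag e_m=e_{m-1}$ for $m\ge 1$ and $S^\dag e_0=0$, and one has $S^\dag S=\bbone$ (though $SS^\dag\ne\bbone$). First I would compute
\begin{equation}
(\bbone-S)^\dag(\bbone-S)=\bbone-S-S^\dag+S^\dag S=2\bbone-S-S^\dag,
\end{equation}
and read off the matrix elements $\langle e_m,(2\bbone-S-S^\dag)e_l\rangle=2\delta_{ml}-\delta_{m,l+1}-\delta_{l,m+1}$. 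Comparing with \eqref{tridiagon}, this is exactly $G_{ml}^\alpha/\mu^2$. Hence the identification
\begin{equation}
G=\mu^2(\bbone-S)^\dag(\bbone-S)
\end{equation}
holds as bounded operators on $\ell^2(\mathbb{N})$, and the positivity is immediate from
\begin{equation}
\langle Gu,u\rangle=\mu^2\langle(\bbone-S)u,(\bbone-S)u\rangle=\mu^2\|(\bbone-S)u\|^2\ge 0
\end{equation}
for every $u\in\ell^2(\mathbb{N})$.

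There is essentially no obstacle here: the only point that needs care is the boundary row $m=0$, where one must verify that the factorization still reproduces the correct $(\bbone-S^\dag)(e_0-e_1)=2e_0-e_1$, matching the second line of \eqref{jacoboperator} type prescription in \eqref{tridiagon}. This is precisely the reason $S^\dag e_0=0$ must be used, and it works automatically.

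As an alternative (and consistency check), one could invoke the spectral machinery of Section \ref{subsection21}: by Theorem \ref{Favard} together with Proposition \ref{kinetic-G}, the operator $-G/\mu^2$ is unitarily equivalent, via $\caU$, to multiplication by $t$ on $L^2(\mu)$, where $\mu$ is the probability measure associated to the (shifted) Chebyshev polynomials of the second kind, whose support is contained in $[-4,0]$. By the spectral theorem this yields $\langle Gu,u\rangle=-\mu^2\int_{[-4,0]}t\,d\mu_{u,u}(t)\ge 0$. I would present only the factorization argument in the main text since it is elementary and self-contained, and relegate the spectral-theoretic viewpoint to a parenthetical remark, as it connects naturally with the upcoming computation of the propagator via \eqref{formal-propagatoren}.
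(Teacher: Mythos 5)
Your factorization is correct and the proof goes through: with $Se_m=e_{m+1}$, $S^\dag e_0=0$ and $S^\dag S=\bbone$ one indeed gets $\mu^2(\bbone-S)^\dag(\bbone-S)=\mu^2(2\bbone-S-S^\dag)=G$, the boundary row $m=0$ being handled automatically by $S^\dag e_0=0$ exactly as you note. The paper takes the equivalent but less structural route of computing the quadratic form directly and performing a summation by parts, arriving at $\langle v,Gv\rangle=\mu^2\bigl(|v_0|^2+\sum_m|v_m-v_{m+1}|^2\bigr)$; this is nothing but the expansion of $\mu^2\|(\bbone-S)v\|^2$ in the canonical basis, so the two arguments are the same computation in different clothing. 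What your version buys is that positivity is manifest without any index bookkeeping, and it hands you for free the strict statement the paper actually asserts: since $\bbone-S$ is injective (if $u=Su$ then $u_0=0$ and $u_{m+1}=u_m$, so $u=0$), one gets $\langle Gv,v\rangle=\mu^2\|(\bbone-S)v\|^2>0$ for $v\ne0$, i.e.\ $\ker G=\{0\}$ — a fact the paper needs later for $\ker(D)=\{0\}$ and for Lemma \ref{propagator-spectform}. You should state that last step explicitly, since your write-up only concludes $\ge0$ while the paper's proof concludes $>0$. What the paper's version buys is the explicit sum-of-squares formula, which makes visible that the quadratic form is a discrete Dirichlet energy plus a boundary term. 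Your spectral-theoretic aside is also sound (the support of the relevant measure lies in $[-4,0]$ after the shift $t=2x-2$), though it leans on Proposition \ref{kinetic-G} and overloads the symbol $\mu$, so keeping it as a parenthetical remark is the right call.
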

\begin{proof}
For any $v\in\ell^2(\mathbb{N})$, $v=\sum_kv_ke_k\ne0$, we compute $\langle v,Gv\rangle$. One has:
\begin{equation}
\langle v,Gv\rangle=\mu^2\sum_{m\in\mathbb{N}}(v_m^\dag(2v_m-v_{m+1}-v_{m-1}))=\mu^2\left(|v_0|^2+\sum_{m\in\mathbb{N}}|(v_m-v_{m+1})|^2\right)>0.
\end{equation}
Hence, $G$ is positive.
\end{proof}
\begin{remark}\label{locate-spectrum} \ \\[-20pt]
\begin{itemize}
\setlength{\itemsep}{-1pt}
\item i) Since $G$ is positive, there exists a self-adjoint operator $D$ on $\ell^2(\mathbb{N})$ such that $G=D^2$. $D$ may therefore play the role of a Dirac operator, that is bounded since $G$ is bounded. We will examine below the spectral triple built from the standard Moyal triple in the ''matrix base'' formulation where the usual Dirac operator, says $\slashed\partial$ is replaced by the above operator $D$.
\item ii) Proposition \ref{g-positive}, together with the fact that $G\in{\cal{B}}(\ell^2(\mathbb{N}))$ imply that the spectrum of $G$, $spec(G)$, satisfies $spec(G)\subseteq[0,||G||]$. We now characterize completely the spectrum of $G$.
\end{itemize}
\end{remark}

We first characterize the point spectrum for the operator $G$, $spec_P(G)$ then use a general theorem valid for bounded self-adjoint operator to identify the full spectrum. This can be easily performed by simply remarking that the eigenvectors of $G^N$ that gave rise to \eqref{spectrum1} can be used to built (''finite range'') eigenvectors for the operator $G$ on $\ell^2(\mathbb{N})$, as a simple corrolary of Lemma \ref{lemma-diag-1}. Namely, 
\begin{lemma}\label{lemma-spectrum-full} \ \\[-20pt]
\begin{itemize}
\setlength{\itemsep}{-1pt}
\item i) The point spectrum of the kinetic operator $G$ on $\ell^2(\mathbb{N})$ defined by \eqref{tridiagon} is $spec_P(G)=\cup_{N\in\mathbb{N}^*}spec(G^N)$ where $spec(G^N)$ given by \eqref{spectrum1} with family of related orthonormal eigenvectors of $\ell^2(\mathbb{N})$ given by 
\begin{equation}
v_{N,m}=\sum_{p=0}^{N-1}\bigg((-1)^m(N+1){{ \sin[{{N(m+1)\pi }\over{N+1 }}  ]}\over {\sin^3[{{(m+1)\pi }\over{N+1 }} ] }} \bigg)^{-{{1}\over{2}}}  U_p\left({{2+\lambda_{N,m}}\over{2}}\right)e_p\label{eigenvektor},
\end{equation}
where $\{e_p\}_{p\in\mathbb{N}}$ is the canonical basis of $\ell^2(\mathbb{N})$ and corresponding eigenvalues
\begin{equation}
\lambda_{N,m}:=2\left(\cos\left({{(m+1)\pi}\over{N+1}}\right)-1\right), \ m\in\{0,1,...,N-1\}, N\in\mathbb{N}^*\}.
\end{equation}
with multiplicity 1.
\item ii) The spectrum of $G$ is 
\begin{equation}
spec(G)=\left\{\lambda^N_k,\ k\in\{0,1,2,...,N-1\}, N\in\mathbb{N}, N\ne0 \right\}\cup\{0\}\label{spectrum-full}.
\end{equation}
\end{itemize}
\end{lemma}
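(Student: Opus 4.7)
My strategy for (i) is a direct componentwise verification; (ii) then follows from closedness of the spectrum combined with the accumulation of the point spectrum at $0$.

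For (i), each $v_{N,m}$ is finitely supported, hence trivially in $\ell^2(\mathbb{N})$. To check the eigenvalue equation $Gv_{N,m}=\lambda^N_m v_{N,m}$ I apply $G$ in the tridiagonal form \eqref{tridiagon} component by component. Writing $x_{N,m}:=(2+\lambda_{N,m})/2=\cos\bigl((m+1)\pi/(N+1)\bigr)$, each bulk index $0\le p\le N-2$ produces $\mu^2\bigl(2U_p(x_{N,m})-U_{p-1}(x_{N,m})-U_{p+1}(x_{N,m})\bigr)$, which by the Chebyshev recurrence \eqref{chebysh2} collapses to $\mu^2(2-2x_{N,m})U_p(x_{N,m})=\lambda^N_m(v_{N,m})_p$. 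The endpoint $p=N-1$ produces an apparent extra contribution proportional to $U_N(x_{N,m})$, which vanishes because $x_{N,m}$ is by construction a zero of $U_N$ (Lemma \ref{lemma-spectral}). The normalization constant in \eqref{eigenvektor} is extracted from the Christoffel--Darboux identity \eqref{christoff-darb2} applied at $x=x_{N,m}$: the $U'_{N-1}U_N$ summand drops out and the surviving $U'_N(x_{N,m})U_{N-1}(x_{N,m})$ is computed in closed form by differentiating $U_N(\cos\theta)=\sin\bigl((N+1)\theta\bigr)/\sin\theta$ and evaluating at $\theta_m=(m+1)\pi/(N+1)$. Simplicity of each eigenvalue is inherited from Proposition \ref{kinetic-G} (iii).

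For (ii), the idea is that $\mathrm{spec}(G)$ is closed and contains $\mathrm{spec}_P(G)$. From (i) this gives $\bigcup_{N\in\mathbb{N}^*}\mathrm{spec}(G^N)\subseteq\mathrm{spec}(G)$. The sequence $\lambda^N_0=2\mu^2\bigl(1-\cos(\pi/(N+1))\bigr)$ tends to $0$ as $N\to\infty$, so $0$ is a limit point of the point spectrum and must belong to $\mathrm{spec}(G)$ by closedness, even though it is not an eigenvalue. The reverse containment $\mathrm{spec}(G)\subseteq\bigcup_N\mathrm{spec}(G^N)\cup\{0\}$ is obtained by identifying $\mathrm{spec}(G)$ with $\mathrm{supp}(\mu)$ via the Favard Theorem \ref{Favard} (more precisely by using the functional calculus representation of the resolvent $\langle e_m,(G-\lambda)^{-1}e_n\rangle=\int(t-\lambda)^{-1}P_m(t)P_n(t)\,d\mu(t)$ to exhibit bounded inverses away from that set), and observing that this support coincides with the right-hand side of \eqref{spectrum-full}.

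The most delicate step is the boundary analysis at index $p=N$ in (i): the subdiagonal coupling $a_{N-1}=1$ of $G$ produces an apparent residual $-\mu^2 U_{N-1}(x_{N,m})\,e_N$ when $G$ acts on the finitely supported $v_{N,m}$, and since the zeros of $U_{N-1}$ and $U_N$ interlace, this residual is not automatically zero. Controlling it -- so that $v_{N,m}$ genuinely satisfies the eigenvector equation rather than merely an approximate Weyl-type relation -- is the main technical obstacle in (i), and its treatment is closely tied to the way $0$ enters $\mathrm{spec}(G)$ in (ii) as a spectral accumulation point.
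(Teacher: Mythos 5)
You have put your finger on exactly the right spot, but the conclusion to draw from it is stronger than ``a technical obstacle still to be controlled'': the residual at index $p=N$ cannot be removed, and part (i) of the lemma is false as stated. Writing $x_{N,m}=\cos\theta_m$ with $\theta_m=(m+1)\pi/(N+1)$, one has $U_{N-1}(x_{N,m})=\sin(N\theta_m)/\sin\theta_m=(-1)^m\neq 0$ (equivalently, by the interlacing you invoke, $U_{N-1}$ and $U_N$ share no zeros), so $(Gv_{N,m})_N=-\mu^2 f(\lambda_{N,m})\,U_{N-1}(x_{N,m})\neq 0$ while the $N$-th component of $\lambda^N_m v_{N,m}$ vanishes. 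Hence $v_{N,m}$ is not an eigenvector of the full operator $G$; it is only a Weyl (approximate) eigenvector, with $\|(G-\lambda^N_m)v_{N,m}\|=\mu^2|f(\lambda_{N,m})|=\mu^2\sin\theta_m/\sqrt{N+1}\to 0$. In fact $2-G/\mu^2=S+S^*$ is the free half-line Jacobi matrix, whose spectrum is purely absolutely continuous; equivalently, your own functional-calculus step for (ii) conjugates $G$ into multiplication by $2\mu^2(1-t)$ on $L^2\bigl([-1,1],\tfrac{2}{\pi}\sqrt{1-t^2}\,dt\bigr)$, which has no eigenvalues since the measure is absolutely continuous and the function is injective there. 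So $\mathrm{spec}_P(G)=\emptyset$, and no argument can establish (i). For comparison, the paper's proof asserts from the truncated relation \eqref{finitematrixrecur} that ``one readily realizes'' the truncated vectors are eigenvectors of $G$ on all of $\ell^2(\mathbb{N})$ --- i.e.\ it silently drops precisely the boundary component you isolated. Your scepticism is the correct reading, not a gap you failed to close.

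Your route for (ii) is the sound part and yields the correct answer: $\mathrm{spec}(G)$ is the essential range of $2\mu^2(1-t)$ with respect to the Chebyshev measure, namely the closed interval $[0,4\mu^2]$. This coincides with the \emph{closure} of the right-hand side of \eqref{spectrum-full}, since the numbers $\lambda^N_k=2\mu^2\bigl(1-\cos((k+1)\pi/(N+1))\bigr)$ are dense in $[0,4\mu^2]$, but not with that set itself, which is countable and not closed (for instance $4\mu^2=\lim_N\lambda^N_{N-1}$ is in the spectrum but is not any $\lambda^N_k$, and neither is $0$ an isolated adjoined point --- every point of $[0,4\mu^2]$ is continuous spectrum). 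So the correct statement of the lemma is $\mathrm{spec}(G)=[0,4\mu^2]$ with $\mathrm{spec}_P(G)=\emptyset$; the consequences actually used later (namely $0\in\mathrm{spec}(G)$, hence $G^{-1}$ unbounded as in Lemma \ref{propagator-spectform}) survive, but the eigenbasis $\{v_{N,m}\}$ and the projector decomposition built on it do not.
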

\begin{proof}
First, recall that the point spectrum is determined by the eigenvalues of $G$. We set 
\begin{equation}
G_{ml}=\mu^2\sum_{p\in\mathbb{N}}{\cal{R}}_{mp}\lambda_p{\cal{R}}^\dag_{pl},\ \sum_{p\in\mathbb{N}}{\cal{R}}_{mp}{\cal{R}}^\dag_{pl}=\sum_{p\in\mathbb{N}}{\cal{R}}^\dag_{mp}{\cal{R}}_{pl}=\delta_{ml},\ {\cal{R}}^\dag_{mn}={\cal{R}}_{nm}\label{unitarytrans}.
\end{equation}
From this one easily obtains ${\cal{R}}_{m+1}(\rho_q)+{\cal{R}}_{m-1}(\rho_q)=(2+\rho_q){\cal{R}}_{m}(\rho_q),\ \forall m,q\in\mathbb{N}$ (and ${\cal{R}}_{-1}(\rho_q)=0$), where we have set for convenience $\rho_q=-\lambda_q$, and ${\cal{R}}_{m}(\rho_q):={\cal{R}}_{mq}$. It is the $x=\rho_q$ evaluation of
\begin{equation}
{\cal{R}}_{m+1}(x)+{\cal{R}}_{m-1}(x)=(2+x){\cal{R}}_{m}(x),\ \forall m\ge1,\ {\cal{R}}_1(x)=(2+x){\cal{R}}_0(x) \label{chebyshev-gene}
\end{equation}
with ${\cal{R}}_{-1}(x)=0$. As we did for Lemma \ref{lemma-spectral}, we first truncate the operator to $N\times N$ submatrice ($0\le m,l,...\le N-1$)and set $J^N_{ml}:=(-G^N_{ml})$. Then, we can write 
\begin{equation}
J^N\cdot \left(\begin{array}{c}
\mathcal{R}_0(x)\\
\mathcal{R}_1(x)\\ 
\cdots \\
\cdots \\
\mathcal{R}_{N-1}(x)
  \end{array}  \right) + \left(\begin{array}{c}
 0\\
0\\
\cdots \\ 
0 \\
{\cal{R}}_N(x)
  \end{array}  \right)=x\left(\begin{array}{c}
\mathcal{R}_0(x)\\
\mathcal{R}_1(x)\\ 
\cdots \\
\cdots \\
\mathcal{R}_{N-1}(x)
  \end{array}  \right)\label{finitematrixrecur},
\end{equation}
where (setting $\mathcal{R}_0(x)=f(x)$) ${\cal{R}}_m(x)=f(x)U_m({{2+x}\over{2}})$, $0\le m\le N$. This leads immediately to the proof of ii) of Proposition \ref{kinetic-G}. We define $\lambda_{N,k}:=2\left(\cos({{(k+1)\pi}\over{N+1}})-1\right), \ k\in\{0,1,...,N-1\}, N\in\mathbb{N}^*$ (see \eqref{spectrum1}). On the other hand, one readily realizes from \eqref{finitematrixrecur} that any vector of $\ell^2(\mathbb{N})$ of the form 
\begin{equation}
v_{N,k}=\sum_{p=0}^{N-1}{\cal{R}}_{p}(\lambda_{N,k})e_p=\sum_{p=0}^{N-1}f(\lambda_{N,k})U_p\left({{2+\lambda_{N,k}}\over{2}}\right)e_p\label{eigenvectorG}
\end{equation}
is an eigenvector for the operator $G\in{\cal{B}}(\ell^2(\mathbb{N}))$. Hence the point spectrum of $G$ is given by the spectrum of $G^N$ \eqref{spectrum1}. Now from \eqref{unitarytrans} and the expression for the $\mathcal{R}_n(x)$'s, we obtain 
\begin{equation}
\sum_{p=0}^{N-1}{\cal{R}}_p(\lambda_{N,m}){\cal{R}}_p(\lambda_{N,l})=f(\lambda_{N,m})f(\lambda_{N,l})\sum_{p=0}^{N-1}
U_p\left(\frac{2+\lambda_{N,m}}{2}\right)U_p\left(\frac{2+\lambda_{N,m}}{2}\right)\label{checkorthog1}.
\end{equation}
This relation can be calculated by using the Christoffel-Darboux formula of Proposition \ref{chris-darb-rel}. \eqref{checkorthog1} automatically vanishes whenever $m\ne l$ since it appears only terms involving $U_N({{2+\lambda_{N,k}}\over{2}})$ which are therefore equal to zero since the ${{2+\lambda_{N,k}}\over{2}}$'s are the roots of $U_N$. When $m=l$, one has, setting $t^N_k={{2+\lambda_{N,k}}\over{2}}$,
\begin{equation}
\sum_{p=0}^{N-1}(U_p(t^N_m))^2=U_N^\prime(t^N_m)U_{N-1}(t^N_m)={{N+1}\over{((t^N_m)^2-1)}}(T_{N+1}(t^N_m)U_{N-1}(t^N_m))\label{relat-interm2},
\end{equation}
where we used $U^\prime_N(x)={{(N+1)T_{N+1}(x)-xU_N(x) }\over{x^2-1 }}$ in which $T_N(x)$ is the $N$-th order Chebyshev polynomial of 1st kind, $T_N(\cos\theta)=\cos(N\theta)$). Standard calculations combining \eqref{relat-interm2} to \eqref{unitarytrans} and \eqref{checkorthog1} yields finally
\begin{equation}
f(\lambda_{N,m})=\bigg((-1)^m(N+1){{ \sin[{{N(m+1)\pi }\over{N+1 }}  ]}\over {\sin^3[{{(m+1)\pi }\over{N+1 }} ] }} \bigg)^{-{{1}\over{2}}},\ 0\le p,m\le N-1\label{finalrml}.
\end{equation}
Observes finally that \eqref{checkorthog1}-\eqref{finalrml} yield the construction of unit vectors. Hence, $\{v^N_k\}$ \eqref{eigenvectorG} is an orthonormal family. This completes the proof for i). \\
Now, by recalling that the spectrum of a self-adjoint operator $T$ involves only (generalized) eigenvalues (i.e eigenvalues or limit point values $\lambda\in\mathbb{R}$ such that there exists a sequence of unit vectors in $Dom(T)$ $\{f_n\}_{n\in\mathbb{N}}$ such that $\lim_{n\to\infty}(T-\lambda\bbone)f_n=0$). (See e.g Theorem 2 p.170 of \cite{helmberg}). This together with i) completes the proof of ii).
\end{proof}
\begin{remark} \ \\[-20pt]
\begin{itemize}
\setlength{\itemsep}{-1pt}
\item i) We will show below that $G$ is invertible (see Lemma \ref{propagator-spectform}). Then, the orthonormal family $\{v^N_k \}$ \eqref{eigenvectorG} is an orthonormal basis for $\ell^2(\mathbb{N})$.
\item ii) It is instructive to make further comments on the full spectrum of $G$ \eqref{spectrum-full}. We have separated the point (discrete) part of the spectrum from the limit value $0$. We now re-examine the point ii) of Lemma \ref{lemma-spectrum-full} without using the theorem mentionned in the above proof. Let us first show that $G$ has no continuous spectrum, except possibly $0$ that will be examined below.
\item In view of point iii) of Remark \ref{locate-spectrum}, we show that any real value $\lambda\in]0,||G||]$, $\lambda\ne\lambda^N_k$, $\forall k\in\{0,1,2,...,N-1 \}$, $N\in\mathbb{N}, N\ne0$ is a regular value for the operator $G$. Indeed, one has $(G-\lambda\bbone)v^N_k=-(\lambda-\lambda^N_k)v^N_k$, with $|\lambda-\lambda^N_k|$ bounded, independently of $N$ and $k$. Then, any $v\in\ell^2(\mathbb{N})$ can be written as $v=\sum_{N,k}\alpha^N_kv^N_k$ with $\sum_{N,k}|\alpha^N_k|^2<+\infty$; thus $(G-\lambda\bbone)v=-\sum_{\alpha^N_k}(\lambda-\lambda^N_k)v^N_k$ with $\sum_{N,k}|\alpha^N_k(\lambda-\lambda^N_k)
|^2<+\infty$ still holds true since stemming from the above boundedeness of $|\lambda-\lambda^N_k|$. Then $(G-\lambda\bbone)v\in\ell^2(\mathbb{N})$. Hence, $\lambda$ is a regular value. To determine if $0$ belongs or not to the continuous spectrum, we simply consider the resolvent operator which for $\lambda=0$ is simply $G^{-1}$ which is unbounded since $0$ is in the spectrum. Hence, one can identify $\{0\}$ with the continuous spectrum which is reduced to a unique point in the present case.
\end{itemize}
\end{remark}
\begin{lemma}\label{propagator-spectform}
The inverse of the kinetic operator $G$ as given in Proposition \ref{kinetic-G} is defined by the following unbounded operator on $\ell^2(\mathbb{N})$
\begin{eqnarray}
P_{ml}&=&{{1}\over{\pi\mu^2}}\int_{-1}^1dx\ {\sqrt{{{1+x }\over{ 1-x}}}}U_m(x)U_l(x)\label{invers-G}.
\end{eqnarray}
\end{lemma}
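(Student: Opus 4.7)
The plan is to invoke the spectral-theoretic propagator formula \eqref{formal-propagatoren} for matrix elements of the inverse of a bounded Jacobi operator, apply it to $J := -G/\mu^2$, and then reduce the resulting integral to the canonical Chebyshev interval $[-1,1]$ by an affine change of variable.

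First I would recycle from the proof of Proposition \ref{kinetic-G} the identification of the orthogonal family associated to $J$: in the Favard normalisation $P_0 = 1$, these are the shifted Chebyshev polynomials $P_n(t) = U_n((t+2)/2)$. Favard Theorem \ref{Favard} then singles out a unique probability measure supported on $\mathrm{spec}(J)=[-4,0]$ making this family orthonormal; pulling the standard Chebyshev weight \eqref{orthocontinucheb2} back through $x = (t+2)/2$ and normalising to $\int d\mu_J = 1$ fixes
\[
d\mu_J(t) \;=\; \frac{1}{2\pi}\sqrt{-t(t+4)}\,dt.
\]
Substituting into \eqref{formal-propagatoren}, using $G^{-1} = -\mu^{-2}J^{-1}$, and performing the change of variable $x = (t+2)/2$ (so $dt = 2\,dx$ and $\sqrt{-t(t+4)} = 2\sqrt{1-x^2}$) combines the singular factor $1/t$ with $\sqrt{1-x^2}$ into exactly $\sqrt{(1+x)/(1-x)}$, and the numerical prefactors collapse to $1/(\pi\mu^2)$, yielding the advertised formula.

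The genuine obstacle is that \eqref{formal-propagatoren} was stated under the hypothesis $\{0\}\notin\mathrm{spec}(J)$, whereas Lemma \ref{lemma-spectrum-full} places $0$ in $\mathrm{spec}(G)$ as the unique continuous-spectrum point, so $G^{-1}$ is unbounded and the formula cannot be applied off-the-shelf. I would resolve this by observing that $0$ is not an eigenvalue (so $G$ is injective, hence $G^{-1}$ exists on $\mathrm{Ran}(G)$ by functional calculus) and by checking directly that the spectral integrand $t^{-1}P_m(t)P_l(t)\,d\mu_J(t)$ is $L^1$ on $[-4,0]$: near $t=0$ the density $\sqrt{-t(t+4)}/t$ behaves like $2/\sqrt{-t}$ while $P_m(0)P_l(0) = U_m(1)U_l(1) = (m+1)(l+1)$ is finite, so the integrable singularity at $x=1$ translates into a finite matrix element. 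Hence each $P_{ml}$ is a well-defined real number, the operator $P$ they assemble coincides with the natural functional-calculus inverse of $G$ on its dense domain $\mathrm{Ran}(G)$, and the unboundedness asserted in the statement is precisely what the spectral analysis predicts. As a sanity check one can verify $(GP)_{ml} = \delta_{ml}$ termwise using the three-term recurrence \eqref{chebysh2} together with the Christoffel–Darboux formula \eqref{christof-darb}.
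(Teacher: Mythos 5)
Your proposal is correct and follows essentially the same route as the paper: the paper likewise presents the formula as ``a direct application'' of \eqref{formal-propagatoren} and then verifies $\sum_{l}G_{ml}P_{lr}=\delta_{mr}$ (and the left-inverse identity) directly from the recurrence $U_{n+1}(x)+U_{n-1}(x)=2xU_n(x)$ together with the orthogonality relation \eqref{orthocontinucheb2} --- exactly your closing sanity check. Your explicit handling of the obstruction $0\in spec(G)$ (injectivity of $G$ plus integrability of the $(1-x)^{-1/2}$ singularity, with $U_m(1)=m+1$ finite) is a point the paper leaves implicit and is a welcome refinement rather than a departure.
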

\begin{proof}
This is a direct application of the discussion given at the end of section 2. For further convenience, it is better to express the propagator with the ''standard'' Chebyshev polynomials. The obtention of \eqref{invers-G} is just routine calculation that uses the recurrence for the Chebyshev polynomials $U_m$. Notice that one can easily check for instance $\sum_{l\in\mathbb{N}}G_{ml}P_{lr}={{1}\over{\pi}}\int_{-1}^1dx{\sqrt{\frac{1+x}{1-x}}}(2\delta_{ml}-\delta_{m,l+1}-\delta_{m,l-1})U_l(x)U_r(x)$. Then, use $U_{n+1}(x)+U_{n-1}(x)=2xU_n(x)$ to put the integral into the form $\sum_{l\in\mathbb{N}}G_{ml}P_{lr}=\frac{2}{\pi}\int_{-1}^1dx{\sqrt{\frac{1+x}{1-x}}}(1-x)U_l(x)U_r(x)$ and finally \eqref{orthocontinucheb2} to show that $P_{ml}$ is the right inverse of $G$. In the same way $\sum_{l\in\mathbb{N}}P_{ml}G_{lr}=\delta_{mr}$.
\end{proof}

\begin{remark}\label{remark-dirac} \ \\[-20pt]
\begin{itemize}
\setlength{\itemsep}{-1pt}
\item i) As an illustration of the discussion at the end of the section 2, it is easily observed that $G_{ml}=\langle e_m,Ge_l \rangle $ \eqref{tridiagon} can be cast into the form 
\begin{equation}
G_{ml}=\int_{-4}^0t\ (P_m(t)P_l(t)d\mu(t)),\ \ d\mu(t)=-\frac{1}{\pi}dt(-\frac{t}{2}(2+\frac{t}{2}))^{\frac{1}{2}}\label{spectral-kineticterm}
\end{equation}
where the polynomials $P_n$ satisfy $P_{n+1}(t)+P_{n-1}(t)-2P_n(t)=tP_n(t)$, $P_1(t)-2P_0(t)=tP_0(t)$, $P_0(t)=1$ (see e.g the proof of Proposition \ref{kinetic-G}). This indeed agrees with the Favard Theorem \ref{Favard} and especially the spectral theorem \ref{orthop-1} (recall eqn. \eqref{spectral-kinetic-operat}).
\item ii) From the above analysis, we define the following family of projectors
\begin{equation}
P^N_k:\ell^2(\mathbb{N})\to {\cal{E}}^N_k:=span\{v^N_k\},\ P^N_k:=\lambda^N_k| v^N_k\rangle\langle v^N_k |,\ k\in\{0,1,...N-1 \}, N\in\mathbb{N}^*\label{proj-simple}.
\end{equation}
By further combining this spectral family with Propositions \ref{kinetic-G} and \ref{lemma-diag-1}, one infers
\begin{equation}
G^N=\sum_{n=1}^N\sum_{k=0}^{n-1}\lambda^n_kP^n_k\label{gn-projector},
\end{equation}
and one easily realizes that $\lim_{N\to\infty}G^N=G$ where the convergence holds true for the strong operator topology. Indeed, for any $f\in\ell^2(\mathbb{N})$, $f=\sum_{N,k}f^N_kv^N_k$, a simple computation yields $$||(G^L-G)f ||^2_2=\sum_{n=L}^\infty\sum_{k=0}^{n-1}|\lambda^n_k|^2|f^n_k|^2\le\max\{|\lambda^n_k|^2 \}\sum_{n=L}^\infty\sum_{k=0}^{n-1}|f^n_k|^2\le \max\{|\lambda^n_k|^2 \}||f||_2^2$$. 
\noindent Set $S_N:=\sum_{n=1}^N\sum_{k=0}^{n-1}|f^n_k|^2$. Obviously, for any $\varepsilon>0$, one can find $N_0\in\mathbb{N}$ such that for any $n>N_0$, one has $|S_n-||f||^2_2|<\varepsilon$. Now, simply write $|\sum_{n=L}^N\sum_{k=0}^{n-1}|f^n_k|^2|=|S_N-S_L|=|S_N-||f||^2_2|-S_L+||f||^2_2|$ and uses the convergence condition for $S_N$ to show that $\lim_{L\to\infty}||(G^L-G)f ||^2_2=0$ for any $f\in\ell^2(\mathbb{N})$. Hence $\lim_{N\to\infty}G^N=G$.
\end{itemize}
\end{remark}
Keeping in mind the point i) of Remark \ref{locate-spectrum} and the point ii) above, we define the following self-adjoint operator $D\in{\cal{B}}(\ell^2(\mathbb{N}))$ such that $D^2=G$
\begin{equation}
D:=\sum_{n=1}^\infty\sum_{k=0}^{n-1}\alpha^n_kP^n_k, \ (\alpha^n_k)^2=\lambda^n_k,\ k\in\{0,1,...N-1 \}, N\in\mathbb{N}^*\label{Dirac-simple},
\end{equation}
with $\ker(D)=\{0 \}$, which will be used below as a Dirac operator involved in a spectral triple.


\section{Dirac operators and spectral triples}\label{ncg}
We now examine if a spectral triple built from the above Dirac operator can support additional structures such as regularity or summability properties. We have of course in mind a spectral triple with the square root of the kinetic operator as Dirac operator from which the NCFT action \eqref{actmatrix} would be obtained as a spectral action. This would correspond to the noncommutative geometry underlying the gauge-fixed action \eqref{actmatrix}.\par 

Some properties of the related spectral triple do not depends on the explicit expression of $D$ but only on the classes of operators it belongs to. Here, one salient operatorial property of $D$ \eqref{Dirac-simple} is that it is bounded. We will therefore consider first the general case of an unspecified self-adjoint operator ${\cal{D}}$ with $Dom({\cal{D}})=\ell^2(\mathbb{N})$ (hence automatically bounded). This together with the fact that the relevant $*$-representation $\eta$ defined in Section \ref{alg2} 
\begin{equation}
\eta:{\cal{M}}_\theta\to{\cal{B}}(\ell^2(\mathbb{N})),\ \eta(e_m\otimes e_n)=e_m\otimes e^*_n,\ \forall m,n\in\mathbb{N}\label{Left-representation}
\end{equation}
maps elements of ${\cal{M}}_\theta$ into Hilbert-Schmidt operators on $\ell^2(\mathbb{N})$ will single out a limited class of spectral triples. We will then specialize to the Dirac operator \eqref{Dirac-simple} when necessary.\par

We denote as usual by ${\cal{L}}^p({\cal{H}})$, $p\ge1$, the $p$-th Schatten ideals of ${\cal{B}}({\cal{H}})$ (the Hilbert space ${\cal{H}}$ will be set equal to $\ell^2(\mathbb{N})$ in a while). Schatten norm on ${\cal{L}}^p({\cal{H}})$ 
is denoted by $||.||_p$. It is convenient to recall the definition of a spectral triple that will be used in the sequel.

\begin{definition}\label{spect-triple}
A spectral triple is the set of data $(\mathbb{A},\pi,{\cal{H}},{\cal{D}})$ where $\mathbb{A}$ is an involutive ($C^*$-)algebra and ${\cal{H}}$ is a separable Hilbert space carrying a faithfull $*$-representation of $\mathbb{A}$ and  ${\cal{D}}$ a self-adjoint densely defined operator on ${\cal{H}}$, non necessarily bounded, such that for any $a\in\mathbb{A}$:
\begin{itemize}
\vspace*{-3pt}
\setlength{\itemsep}{-1pt}
\item i) $\pi(a)$ Dom$({\cal{D}})\subseteq$ Dom$({\cal{D}})$,
\item ii) $[{\cal{D}},\pi(a)]\in{\cal{B}}({\cal{H}})$,
\item iii) for any $z\notin spec({\cal{D}})$, $\pi(a)R_{{\cal{D}}}(z)$ is a compact operator on ${\cal{H}}$, where $R_{{\cal{D}}}$ is the resolvent operator of ${\cal{D}}$.
\end{itemize}
\end{definition}

\begin{definition}\label{def-regular}
A spectral triple is regular if (in obvious notations) $\pi(\mathbb{A})\cup[{\cal{D}},\pi(\mathbb{A})]\subseteq\bigcap_{j\in\mathbb{N}}Dom(\delta^j)$ where $\delta T:=[|{\cal{D}}|,T]$ for any $T\in{\cal{B}}({\cal{H}})$.
\end{definition}
Notice that Definition \ref{spect-triple} corresponds to an odd spectral triple. We will introduce additional simple structures in a while to built even triples.\par

We now specialize to the case ${\cal{H}}=\ell^2(\mathbb{N})$ and $\pi=\eta$ \eqref{Left-representation} but still do not focus on some specific Dirac operator, only assuming that it is self-adjoint with $Dom({\cal{D}})=\ell^2(\mathbb{N})$. We can characterize interesting properties of the spectral triple built from ${\cal{D}}$ by the following small theorem. To simplify the notation, we now set 
${\cal{L}}^p(\ell^2(\mathbb{N}))={\cal{L}}^p$.
\begin{lemma}\label{th1}
The data $\mathfrak{X}_{{\cal{D}}}:=({\cal{M}}_\theta,\eta,\ell^2(\mathbb{N}), {\cal{D}}))$ is a regular spectral triple.
\end{lemma}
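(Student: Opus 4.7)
The plan is to check the three axioms of Definition \ref{spect-triple} and then the regularity condition of Definition \ref{def-regular}, relying systematically on two key structural features of the data: first, that $\mathcal{D}$ is bounded and self-adjoint on the whole Hilbert space $\ell^2(\mathbb{N})$, and second, that $\eta(\mathcal{M}_\theta)$ lies inside the Hilbert--Schmidt ideal $\mathcal{L}^2 \subset \mathcal{L}^\infty$, as established in the discussion surrounding \eqref{Left-representation}. In particular each $\eta(a)$ is a bounded compact operator on $\ell^2(\mathbb{N})$.

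For axiom i), the inclusion $\eta(a)\,\mathrm{Dom}(\mathcal{D})\subseteq\mathrm{Dom}(\mathcal{D})$ is automatic since $\mathrm{Dom}(\mathcal{D})=\ell^2(\mathbb{N})$ by assumption. For axiom ii), both $\mathcal{D}$ and $\eta(a)$ belong to $\mathcal{B}(\ell^2(\mathbb{N}))$, so the commutator $[\mathcal{D},\eta(a)]$ is a bounded operator with $\|[\mathcal{D},\eta(a)]\|\le 2\|\mathcal{D}\|\,\|\eta(a)\|$. For axiom iii), fix $z\notin\mathrm{spec}(\mathcal{D})$; then $R_\mathcal{D}(z)=(\mathcal{D}-z\bbone)^{-1}\in\mathcal{B}(\ell^2(\mathbb{N}))$, and since $\eta(a)$ is compact (even Hilbert--Schmidt) and compact operators form a two-sided ideal of $\mathcal{B}(\ell^2(\mathbb{N}))$, the product $\eta(a)R_\mathcal{D}(z)$ is compact. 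This establishes that $\mathfrak{X}_\mathcal{D}$ is a spectral triple.

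For regularity, the point is that $\mathcal{D}$ bounded implies $|\mathcal{D}|=(\mathcal{D}^2)^{1/2}$ bounded as well (via the continuous functional calculus), and hence the derivation $\delta(T):=[|\mathcal{D}|,T]$ is everywhere defined on $\mathcal{B}(\ell^2(\mathbb{N}))$ and maps this algebra into itself, with $\|\delta(T)\|\le 2\|\mathcal{D}\|\,\|T\|$. By induction, every power $\delta^j$ is a bounded endomorphism of $\mathcal{B}(\ell^2(\mathbb{N}))$, so $\mathrm{Dom}(\delta^j)=\mathcal{B}(\ell^2(\mathbb{N}))$ for every $j\in\mathbb{N}$. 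Both $\eta(a)$ and $[\mathcal{D},\eta(a)]$ are bounded by the previous paragraph, hence lie in $\bigcap_{j\in\mathbb{N}}\mathrm{Dom}(\delta^j)$, giving regularity.

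Essentially there is no hard step here; the whole statement is a direct consequence of the fact that $\mathcal{D}$ is bounded and $\eta(\mathcal{M}_\theta)\subset\mathcal{K}(\ell^2(\mathbb{N}))$. The only minor subtlety worth pointing out in the write-up is that one must invoke $|\mathcal{D}|\in\mathcal{B}(\ell^2(\mathbb{N}))$ rather than $\mathcal{D}$ itself when verifying regularity, since $\delta$ is defined with $|\mathcal{D}|$; this follows from functional calculus applied to the bounded self-adjoint operator $\mathcal{D}$.
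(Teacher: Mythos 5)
Your proposal is correct and follows essentially the same route as the paper: everything reduces to the boundedness of $\mathcal{D}$ together with the fact that $\eta(\mathcal{M}_\theta)\subset\mathcal{L}^2\subset\mathcal{K}(\ell^2(\mathbb{N}))$ and the two-sided ideal property, and the regularity argument via $\mathrm{Dom}(\delta^j)=\mathcal{B}(\ell^2(\mathbb{N}))$ is identical. The only cosmetic difference is that the paper records the slightly stronger conclusion that $[\mathcal{D},\eta(a)]$ and $\eta(a)R_\mathcal{D}(z)$ are actually Hilbert--Schmidt (with explicit H\"older estimates), whereas you only use boundedness and compactness, which suffices.
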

\begin{proof}
To prove that $\mathfrak{X}_{{\cal{D}}}$ is a spectral triple, we first need to use simple properties stemming from the fact that ${\cal{D}}$ is bounded and $\eta(a)$ is Hilbert-Schmidt, $\eta(a)\in{\cal{L}}^2$. Let $R_{\cal{D}}(z)$ denotes the resolvent operator for ${\cal{D}}$. In fact, for any $a\in{\cal{M}}_\theta$ and $z\notin spec({\cal{D}})$, the following properties hold true: i) $[{\cal{D}},\eta(a)]\in{\cal{L}}^2$, ii) $\eta(a)R_{\cal{D}}(z)\in{\cal{L}}^2$, iii) $[\eta(a),R_{\cal{D}}(z)]\in{\cal{L}}^2$. Points i), ii) and iii) are direct consequences of the fact that ${\cal{L}}^p$ is a 2-sided ideal of ${\cal{B}}(\ell^2(\mathbb{N}))$. Anyway, it is instructive to give a less direct proof by using H\"older estimates. Namely, use $||T||\le||T||_2$ and $||ATB||_p\le||A||||T||_p||B||$ for any $T\in{\cal{L}}^p$, $A,B\in{\cal{B}}(\ell^2(\mathbb{N}))$ to obtain for any $a\in{\cal{N}}_\theta$
\begin{equation}
||[{\cal{D}},\eta(a)]||\le||[{\cal{D}},\eta(a)]||_2\le2||{\cal{D}}||||\eta(a)||_2<+\infty\label{estimate-triple11}.
\end{equation}
Hence i) is proven. Next, for any $a\in{\cal{M}}_\theta$ and any $z\notin spec({\cal{D}})$, the following estimate holds true
\begin{equation}
||\eta(a)R_{\cal{D}}(z)||_2\le||\eta(a)||_2||R_{\cal{D}}(z)||<\infty, ||[\eta(a),R_{\cal{D}}(z)]||_2\le2||R_{\cal{D}}(z)||||\eta(a)||_2<\infty\label{estimate-triple22}
\end{equation}
so that $\eta(a)R_{{\cal{D}}}(z)\in{\cal{L}}^2$, $[\eta(a),R_{\cal{D}}(z)]\in{\cal{L}}^2$ and ii) and iii) are true.\\
Now, ${\cal{M}}_\theta$ is non unital involutive, ${\cal{D}}$ self-adjoint defined everywhere on $\ell^2(\mathbb{N})$ by assumption and the $*$-representation $\eta$ \eqref{Left-representation} is faithfull by construction. Besides, property i) implies that $[{\cal{D}},\eta(a)]\in{\cal{B}}(\ell^2(\mathbb{N}))$ while property ii) implies that for any $z\notin spec({\cal{D}})$, $\eta(a)R_{\cal{D}}(z)$ is a compact operator on $\ell^2(\mathbb{N})$ and one has obviously $\eta(a)v\in\ell^2(\mathbb{N})$ for any $a\in{\cal{M}}_\theta$. Hence, $\mathfrak{X}_{{\cal{D}}}$ is a spectral triple.\\
Let $\delta:=[|{\cal{D}}|,.]\in Der({\cal{B}}(\ell^2(\mathbb{N}))) $. Since $|{\cal{D}}|\in{\cal{B}}(\ell^2(\mathbb{N}))$, one has $Dom(\delta)={\cal{B}}(\ell^2(\mathbb{N}))$ and by simple induction $Dom(\delta^j)={\cal{B}}(\ell^2(\mathbb{N}))$ for any $j\in\mathbb{N}$. Therefore
\begin{equation}
\bigcap_{j\in\mathbb{N}}Dom(\delta^j)={\cal{B}}(\ell^2(\mathbb{N}))\label{rhs-regular}.
\end{equation}
But one has $\eta({\cal{M}}_\theta)\subseteq{\cal{L}}^2$ and $[{\cal{D}},\pi({\cal{M}}_\theta)]\subseteq{\cal{B}}(\ell^2(\mathbb{N}))$ so that Definition \ref{def-regular} is verified. Hence, the spectral triple $\mathfrak{X}_{{\cal{D}}}$ is regular.\\
\end{proof}

\noindent At this point, two remarks are in order

\begin{remark}
Since ${\cal{D}}$ is bounded together with the fact that we use the natural Hilbert-Schmidt action of ${\cal{M}}_\theta$ on $\ell^2(\mathbb{N})$ throught the representation $\eta$, it can be expected that the above spectral triple $\mathfrak{X}_{{\cal{D}}}$ is, informally speaking, ''close to'' a Fredholm module. Recall that a Fredholm module, i.e an analytic $K$-cycle, is defined as the set of data $(\mathbb{A},\pi,{\cal{H}},F)$ as in Definition \ref{spect-triple} with $\pi$ a $*$-representation and the operator $F\in{\cal{B}}({\cal{H}})$ is such that for any $a\in\mathbb{A}$, $\pi(a)(F^2-1)$, $\pi(a)(F-F^*)$ and $[\pi(a),F]$ are compact operators on ${\cal{H}}$. Furthermore, a Fredholm module is $p$-summable if the $*-$algebra $\mathfrak{A}:=\{a\in\mathbb{A}: [\pi(a),F]\in{\cal{L}}^p({\cal{H}}),\ \pi(a)(F-F^*)\in{\cal{L}}^{\frac{p}{2}}({\cal{H}}),\ \pi(a)(F^2-1)\in{\cal{L}}^{\frac{p}{2}}({\cal{H}})\}$ is norm dense in $\mathbb{A}$, for non zero $p\in\mathbb{N}$. In the present case, it can be easily realized that the data ${\cal{F}}:=({\cal{M}}_\theta,\eta,\ell^2(\mathbb{N}), F)$ with $F:=\frac{{\cal{D}}}{|{\cal{D}}|}$ together with $F:=0$ on $\ker({\cal{D}}$) and ${\cal{D}}$ (assuming ${\cal{D}}$ is invertible) define a Fredholm module which is 2-summable.
\end{remark}

\begin{remark}
From now on, we focus on the Dirac operator \eqref{Dirac-simple}. For further use, we define the isometry
\begin{equation}
\mathfrak{J}:\ell^2(\mathbb{N})\to\ell^2(\mathbb{N}),\ \mathfrak{J}:=\sum_{n,k}u^n_kP^n_k,\ (u^n_k)^2=1,\ \forall k\in\{0,1,...,n-1\},\ n\in\mathbb{N}^*\label{isomet-basic},
\end{equation}
where the projectors $P^n_k$ are given by \eqref{proj-simple}. $\mathfrak{J}$ satisfies
\begin{equation}
\mathfrak{J}^2=\bbone,\ D\mathfrak{J}=\mathfrak{J}D\label{commut-1}. 
\end{equation}
According to the analysis of Section 3, any arbitrary element of ${\cal{N}}_\theta$ can be written as 
\begin{equation}
a=\sum a_{n_1k_1;n_2k_2}v^{n_1}_{k_1}\otimes v^{n_2}_{k_2}
\end{equation}
(with $\sum |a_{n_1k_1;n_2k_2} |^2<\infty$) where the $v^{n_i}_{k_i}$'s define the orthonormal basis of $\ell^2(\mathbb{N})$ determined by Lemma \ref{lemma-spectrum-full}. Then, it can be easily realized (in view of e.g \eqref{isomet-basic}) that any element of ${\cal{M}}_\theta$ of the form 
\begin{equation}
a_c=\sum a_{nk}v^n_k\otimes v^n_k 
\end{equation}
commutes with $D$ \eqref{Dirac-simple}. This shows that the spectral triple $\mathfrak{X}_D$, theorem \ref{th1}, related to $D$ does not fullfill one of the necessary conditions for a spectral triple to define a metric space with the Connes spectral distance \cite{Rieffel}. This is unlike the Moyal metric space and its homothetic descendants which all verify this necessary condition. Recall that the latter metric commutant condition is 
\begin{equation}
[D,\eta(a)]=0\iff a=\lambda\bbone,\ \lambda\in\mathbb{C},
\end{equation}
which is not verified here. In other words, $\mathfrak{X}_D$ is not a spectral metric space in the sense of \cite{homot-moyal,bel-mar}. Despite this lost of metric structure for the Connes spectral distance, it turns out that the spectral triple $\mathfrak{X}_D$ can be enlarged with additional algebraic structures, as we now show.
\end{remark}

Let us introduce the Pauli matrices
\begin{equation}
\sigma_1=\begin{pmatrix}
0&1\\ 
1&0
\end{pmatrix}, \sigma_2=\begin{pmatrix}
0&i\\ 
-i&0
\end{pmatrix},\ \sigma_3=i\sigma_1\sigma_2=\begin{pmatrix}
1&0\\
0&-1\end{pmatrix},
\end{equation}
which satisfy $\{\sigma_\mu,\sigma_\nu\}=2\delta_{\mu\nu}\bbone_2$, $\mu,\nu=1,2$ and define an irreducible representation of the Clifford algebra $\mathbb{C}l(2)$. We also define:
\begin{equation}
\gamma^\mu:=\bbone_2\otimes\sigma_\mu,\ \gamma^{\mu+2}:=\sigma_\mu\otimes\bbone_2,\ \mu,\nu=1,2,\label{ko-clifford}
\end{equation}
such that $\{\gamma_\mu,\gamma_\nu \}=2\delta_{\mu\nu}\bbone_4$, $\{\gamma_{\mu+2},\gamma_{\nu+2} \}=2\delta_{\mu\nu}\bbone_4$, $\mu,\nu=1,2$. Let us define the Hilbert space as ${\cal{H}}_0:=\ell^2(\mathbb{N})\otimes\mathbb{C}^4$. We will need to define a real structure and a grading to get an even triple. Accordingly, we introduce the self-adjoint grading operator $\Gamma$ (i.e defining the chirality) and unitary antilinear operator (i.e charge conjugation) ${\cal{J}}$ to implement reality condition. They are defined by:
\begin{equation}
\Gamma:=\gamma_1\gamma_2\gamma_3\gamma_4=-\sigma_3\otimes\sigma_3,\ \Gamma^2=\bbone_4\label{chiral},
\end{equation}
\begin{equation}
{\cal{J}}:=\gamma_2{\cal{C}}=(\bbone_2\otimes\sigma_2){\cal{C}},
\end{equation}
where ${\cal{C}}$ is the complex conjugation. We next pick the following Dirac operator
\begin{equation}
{\cal{D}}:=D\gamma_3=D\sigma_1\otimes\bbone_2,
\end{equation}
with $D$ given in \eqref{Dirac-simple}. We quote a useful set of algebraic properties.
\begin{proposition}\label{KO-real}
The following properties holds true:
\begin{eqnarray}
{\cal{J}}^2&=&-\bbone_2,\ {\cal{J}}{\cal{D}}={\cal{D}}{\cal{J}},\ {\cal{J}}\Gamma=-\Gamma{\cal{J}},\label{KO-dim}\\ 
\Gamma^2&=&1,\ {\cal{D}}\Gamma=-\Gamma{\cal{D}}\label{real-struct-grad}.
\end{eqnarray}
\end{proposition}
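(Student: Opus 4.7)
The statement is a set of five identities about $\mathcal{J}$, $\Gamma$ and $\mathcal{D}$ acting on $\ell^2(\mathbb{N})\otimes\mathbb{C}^4$. Since all three operators factor as a scalar operator on $\ell^2(\mathbb{N})$ tensored with an explicit $4\times 4$ matrix built from Pauli matrices (with $\mathcal{J}$ additionally involving the antilinear $\mathcal{C}$), the plan is to reduce each identity to a short computation in the Clifford algebra $\mathbb{C}l(2)\otimes\mathbb{C}l(2)$ generated by the $\sigma_\mu$'s, together with the observation that $D$, as defined in \eqref{Dirac-simple}, acts by a real matrix in the canonical basis $\{e_n\}$ of $\ell^2(\mathbb{N})$.

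The preliminary observation I would isolate first is that $\mathcal{C}D=D\mathcal{C}$. This follows from Lemma \ref{lemma-spectrum-full}: the eigenvalues $\lambda^n_k$ are real, the eigenvectors $v^n_k$ \eqref{eigenvektor} have real coordinates in $\{e_p\}_{p\in\mathbb{N}}$ (since $U_p$ is a polynomial with real coefficients evaluated at real arguments, and the overall normalisation is real), so the projectors $P^n_k=|v^n_k\rangle\langle v^n_k|$ have real matrix elements. Choosing $\alpha^n_k=\sqrt{\lambda^n_k}\in\mathbb{R}$ in \eqref{Dirac-simple} then makes $D$ itself a real operator, hence $[D,\mathcal{C}]=0$. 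The same trivially holds for $\sigma_1$ and $\sigma_3$ (real Pauli matrices), while $\mathcal{C}\sigma_2=-\sigma_2\mathcal{C}$ since $\sigma_2$ is purely imaginary.

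Given this, each identity is a one-line check:
\begin{itemize}
\item $\mathcal{J}^2=(\bbone_2\otimes\sigma_2)\mathcal{C}(\bbone_2\otimes\sigma_2)\mathcal{C}=-(\bbone_2\otimes\sigma_2^2)\mathcal{C}^2=-\bbone_4$.
\item $\Gamma^2=(-\sigma_3\otimes\sigma_3)^2=\sigma_3^2\otimes\sigma_3^2=\bbone_4$.
\item $\mathcal{D}\Gamma=-D\sigma_1\sigma_3\otimes\sigma_3$ and $\Gamma\mathcal{D}=-D\sigma_3\sigma_1\otimes\sigma_3$ differ by the sign coming from $\{\sigma_1,\sigma_3\}=0$, giving $\mathcal{D}\Gamma=-\Gamma\mathcal{D}$.
\item $\mathcal{J}\mathcal{D}=(\bbone_2\otimes\sigma_2)\mathcal{C}(D\sigma_1\otimes\bbone_2)=(D\sigma_1\otimes\sigma_2)\mathcal{C}=(D\sigma_1\otimes\bbone_2)(\bbone_2\otimes\sigma_2)\mathcal{C}=\mathcal{D}\mathcal{J}$, using $[\mathcal{C},D]=[\mathcal{C},\sigma_1]=0$ and the commutation of factors in disjoint tensor slots.
\item $\mathcal{J}\Gamma=-(\bbone_2\otimes\sigma_2)\mathcal{C}(\sigma_3\otimes\sigma_3)=-(\sigma_3\otimes\sigma_2\sigma_3)\mathcal{C}$ while $\Gamma\mathcal{J}=-(\sigma_3\otimes\sigma_3\sigma_2)\mathcal{C}$; since $\{\sigma_2,\sigma_3\}=0$, the two differ by a sign, proving $\mathcal{J}\Gamma=-\Gamma\mathcal{J}$.
\end{itemize}

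There is no real obstacle: the only non-mechanical point is justifying $[D,\mathcal{C}]=0$, which I would spell out explicitly because it is precisely what forces the Dirac operator to commute (rather than anticommute) with the real structure, fixing the KO-dimension of the triple. Everything else is bookkeeping in $\sigma_\mu\otimes\sigma_\nu$.
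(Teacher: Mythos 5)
Your proof is correct and is precisely the ``routine computation'' that the paper's proof consists of, with the details written out: the Pauli-algebra bookkeeping matches, and you rightly isolate the only substantive ingredient, namely that $D$ is a real operator (nonnegative eigenvalues $\lambda^n_k$ of the positive operator $G$, real normalisations and real values of the $U_p$ in \eqref{eigenvektor}, hence real projectors $P^n_k$), so that $[D,{\cal{C}}]=0$ and ${\cal{J}}$ commutes rather than anticommutes with ${\cal{D}}$. The only cosmetic discrepancy is that you correctly obtain ${\cal{J}}^2=-\bbone$ on all of ${\cal{H}}_0$, where the statement's $-\bbone_2$ is evidently a typographical slip.
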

\begin{proof}
This is routine computation.
\end{proof}
Now it is not difficult to equip the spectral triple of lemme \ref{th1} with additional structures. Namely, one has
\begin{theorem}\label{th2}
The following data:
 \begin{equation}
({\cal{M}}_\theta,\pi:=\eta\otimes\bbone_4,{\cal{H}}_0:=\ell^2(\mathbb{N})\otimes\mathbb{C}^4,{\cal{D}}:=D\gamma_3;{\cal{J}}:=\gamma_2{\cal{C}},\Gamma:=-\sigma_3\otimes\sigma_3),\label{ncg-fluctuat}
\end{equation}
where $D$ is given by \eqref{Dirac-simple}, is a regular spectral. The triple is even, supports a weak real structure defined by ${\cal{J}}$ and $\Gamma$ with KO-dimension equal to 2 and verifies the commutant condition:
\begin{itemize}
\vspace*{-4pt}
\setlength{\itemsep}{-1pt}
\item i) $[\pi(a),{\cal{J}}^{-1}\pi(b^\dag){\cal{J}}]=0$, $\forall a,b\in{\cal{M}}_\theta$,\\
and the first order condition modulo compact operators;
\item ii) $[[\pi(a),D],{\cal{J}}^{-1}\pi(b^\dag){\cal{J}}]\in{\cal{K}}({\cal{H}}_0)$, $\forall a,b\in{\cal{M}}_\theta$.
\end{itemize}
\end{theorem}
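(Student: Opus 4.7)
The strategy is to handle the five claims in the theorem separately, leveraging the infrastructure already built. For the regular spectral triple structure, everything reduces to Lemma \ref{th1} with cosmetic adjustments for the additional finite-dimensional factor $\mathbb{C}^4$. The operator ${\cal{D}}=D\gamma_3$ is bounded and self-adjoint since $D$ and $\gamma_3$ are self-adjoint and act on orthogonal tensor factors. The representation $\pi=\eta\otimes\bbone_4$ is faithful and lands in ${\cal{L}}^2\otimes\bbone_4\subset{\cal{B}}({\cal{H}}_0)$. Since $[{\cal{D}},\pi(a)]=[D,\eta(a)]\otimes\gamma_3$, the estimates of \eqref{estimate-triple11} still apply to give boundedness, while $\pi(a)R_{{\cal{D}}}(z)$ is compact exactly as in \eqref{estimate-triple22}. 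Regularity is once again automatic since $|{\cal{D}}|\in{\cal{B}}({\cal{H}}_0)$ implies $\bigcap_{j\in\mathbb{N}}Dom(\delta^j)={\cal{B}}({\cal{H}}_0)$.

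The even structure and the KO-dimension label are algebraic. I would verify $[\pi(a),\Gamma]=0$ by observing that $\pi(a)=\eta(a)\otimes\bbone_4$ acts as the identity on $\mathbb{C}^4$ while $\Gamma$ acts as the identity on $\ell^2(\mathbb{N})$, and then invoke Proposition \ref{KO-real} for ${\cal{D}}\Gamma=-\Gamma{\cal{D}}$. The three signs ${\cal{J}}^2=-\bbone$, ${\cal{J}}{\cal{D}}={\cal{D}}{\cal{J}}$, ${\cal{J}}\Gamma=-\Gamma{\cal{J}}$ in \eqref{KO-dim} match the Connes sign table at KO-dimension $2$.

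For the commutant condition (i), the key step is to compute ${\cal{J}}^{-1}\pi(b^\dag){\cal{J}}$ in closed form. Using ${\cal{J}}^{-1}=-{\cal{J}}$, the antilinearity of ${\cal{C}}$, and $\overline{\gamma_2}=-\gamma_2$, I would push ${\cal{C}}$ through the matrix entries of $\eta(b^\dag)$ and then conjugate by $\gamma_2$; the $\mathbb{C}^4$-part collapses and the resulting operator takes the shape of a right-multiplication implementation of $b$ tensored with $\bbone_4$, whose commutation with $\pi(a)$ boils down to an algebraic identity in ${\cal{M}}_\theta$. Condition (ii) then follows at essentially no extra cost from the Jacobi identity: setting $B:={\cal{J}}^{-1}\pi(b^\dag){\cal{J}}$, assertion (i) gives $[[\pi(a),{\cal{D}}],B]=[\pi(a),[{\cal{D}},B]]$, and $[{\cal{D}},B]$ is Hilbert-Schmidt because $B$ is Hilbert-Schmidt and ${\cal{D}}$ is bounded (by the same estimate as \eqref{estimate-triple11}), so the outer commutator is compact.

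The main obstacle is step (i). Since ${\cal{M}}_\theta$ is represented on a single copy of $\ell^2(\mathbb{N})$ rather than on the natural bimodule $\ell^2(\mathbb{N})\otimes\ell^2(\mathbb{N})$, the interpretation of ${\cal{J}}^{-1}\pi(b^\dag){\cal{J}}$ as a right action commuting with the left one is not automatic; this is presumably why the real structure is qualified as \emph{weak}. The bulk of the effort should go into pinning down how the matrix-base involution $f_{mn}^\dag=f_{nm}$ interacts with ${\cal{C}}$ and $\gamma_2$ so as to produce a commutator that vanishes in the sense required.
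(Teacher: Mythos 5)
Most of your proposal tracks the paper's proof: the spectral-triple axioms and regularity are obtained exactly as you say, by reducing to Lemma \ref{th1} via the identities $[{\cal{D}},\pi(a)]=[D,\eta(a)]\,\sigma_1\otimes\bbone_2$ and $\pi(a)({\cal{D}}^2+1)^{-1}=\eta(a)(D^2+1)^{-1}\otimes\bbone_4$, and the even/weak-real/KO-dimension claims are read off from Proposition \ref{KO-real} and the sign table. For condition (ii) your route differs slightly: you derive it from (i) via the Jacobi identity, writing $[[\pi(a),{\cal{D}}],B]=[\pi(a),[{\cal{D}},B]]$ with $B:={\cal{J}}^{-1}\pi(b^\dag){\cal{J}}$. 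The paper instead gets (ii) directly from $\pi(a)\in{\cal{L}}^2({\cal{H}}_0)$ and the fact that ${\cal{L}}^2({\cal{H}}_0)$ is a two-sided ideal of ${\cal{B}}({\cal{H}}_0)$ contained in ${\cal{K}}({\cal{H}}_0)$: every iterated commutator of a Hilbert--Schmidt operator with bounded operators stays Hilbert--Schmidt. The paper's version is preferable because it makes (ii) independent of (i); your version inherits whatever uncertainty attaches to (i). (Even so, your argument could be repaired without (i): the full Jacobi identity produces the extra term $[[\pi(a),B],{\cal{D}}]$, which is still Hilbert--Schmidt since $[\pi(a),B]\in{\cal{L}}^2({\cal{H}}_0)$.)

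The genuine gap is condition (i), which is the one substantive claim of the theorem not already contained in Lemma \ref{th1} and Proposition \ref{KO-real}, and which you explicitly leave unproved (``the bulk of the effort should go into pinning down\ldots''). The paper closes this step by computing ${\cal{J}}^{-1}\pi(b^\dag){\cal{J}}f=(R(b)\otimes\bbone_4)f$ with $R$ the right multiplication, writing $\pi(a)f=(L(a)\otimes\bbone_4)f$, and invoking $[L(a)\otimes\bbone_4,R(b)\otimes\bbone_4]=0$. Concretely, the antilinear conjugation gives ${\cal{C}}\,\eta(b^\dag)\,{\cal{C}}=\eta(b)^{T}$ on $\ell^2(\mathbb{N})$ (the $\gamma_2$ factor cancels against $\bbone_4$ on the $\mathbb{C}^4$ leg), and this transposed action is what the paper calls the right multiplication. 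Your worry about representing ${\cal{M}}_\theta$ on a single copy of $\ell^2(\mathbb{N})$ rather than on the bimodule $\ell^2(\mathbb{N})\otimes\ell^2(\mathbb{N})$ is a legitimate one and is precisely the point at which your proposal stops; a complete proof must either carry out the commutation $[L(a),R(b)]=0$ at the level of the bimodule structure the paper implicitly invokes, or else justify directly why $[\eta(a),\eta(b)^{T}]$ vanishes in the required sense. Without that computation the commutant condition --- and with it your derivation of (ii) --- is not established.
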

\begin{proof}
The bounded self-adjoint operator ${\cal{D}}$ has obviously its domain $Dom({\cal{D}})={\cal{H}}_0$, while the $*$-representation $\pi$ inherits faithfullness from $\eta$. One has $\pi(a)Dom({\cal{D}})\subseteq {\cal{H}}_0=Dom({\cal{D}})$ for any $a\in{\cal{M}}_\theta$, thanks to $\eta(a)\in{\cal{L}}^2$ and the diagonal form of $\pi(a):=\eta(a)\otimes\bbone_4$. Next, one checks that for any $a\in{\cal{M}}_\theta$
\begin{eqnarray}
[{\cal{D}},\pi(a)]&=&[D,\eta(a)]\sigma_1\otimes\bbone_2,\label{commut-tensor},\\
\pi(a)({\cal{D}}^2+1)^{-1}&=&\eta(a)(D^2+1)^{-1}\otimes\bbone_4\label{resol-tensor}.
\end{eqnarray}
Then, one has $[D,\eta(a)]\in{\cal{B}}(\ell^2(\mathbb{N}))$ which implies $[{\cal{D}},\pi(a)]\in{\cal{B}}({\cal{H}}_0)$ thanks to the action on ${\cal{H}}_0$ of this operator, see \eqref{commut-tensor}. Indeed, set $\ell(a):=[{\cal{D}},\pi(a)]$. Then, for any $f\in{\cal{H}}_0$, $\langle \ell(a)f ,\ell(a)f \rangle=\langle f ,\ell(a)^\dag\ell(a)f \rangle=\langle f ,[D,\eta(a)]^\dag[D,\eta(a)]\otimes\bbone_4f \rangle=\sum_i\langle [D,\eta(a)]f_i , [D,\eta(a)]f_1\rangle$, where $f=(f_i)_{i=1,...,4}$ and the RHS of the last equality is bounded because $[D,\eta(a)]$ is a bounded operator. \\
In the same way, $\eta(a)(D^2+1)^{-1}\in{\cal{L}}^2$ implies that $\pi(a)({\cal{D}}^2+1)^{-1}\in{\cal{L}}^2({\cal{H}}_0)$ (in view of its diagonal action, see \eqref{resol-tensor}), hence it is compact. Thus, from the Definition \ref{spect-triple}, $({\cal{N}}_\theta,\pi:=\eta\otimes\bbone_4,{\cal{H}}_0:=\ell^2(\mathbb{N})\otimes\mathbb{C}^4,{\cal{D}}:=D\gamma_3)$ is a spectral triple.\\
Next, $|{\cal{D}}|$ is bounded and $\delta:=[|{\cal{D}}|,.]\in Der({\cal{B}}({\cal{H}}_0))$ has its domain $Dom(\delta)={\cal{B}}({\cal{H}}_0)$. As for Lemma \ref{th1}, one has $\cap_{j\in\mathbb{N}}Dom(\delta^j) ={\cal{B}}({\cal{H}}_0)$. Besides, $\pi({\cal{M}}_\theta)\subseteq{\cal{L}}^2({\cal{H}}_0)$ and $[{\cal{D}},\pi({\cal{M}}_\theta)]\subseteq{\cal{B}}({\cal{H}}_0)$. Hence $\pi({\cal{M}}_\theta)[{\cal{D}},\pi({\cal{M}}_\theta)]\subseteq\cap_{j\in\mathbb{N}}Dom(\delta^j)$ and the triple is regular.\\
The algebraic properties, Proposition \ref{KO-real}, imply immediately that the triple is even with weak real structure (see Remark \ref{weak-struc} below) defined by ${\cal{J}}$ and $\Gamma$ and with KO-dimension equal to 2, which can be read off from e.g the table of 2nd of ref \cite{Connes1} (pp. 192, Definition 1.124).\\
We have also to consider the conditions of commutant and first order. As mere consequences of ${\cal{J}}\in{\cal{B}}({\cal{H}}_0)$, $\pi(a)\in{\cal{L}}^2({\cal{H}}_0)$ and the fact that ${\cal{L}}^2({\cal{H}}_0)\subset{\cal{K}}({\cal{H}}_0)$ is a two-sided ideal of ${\cal{B}}({\cal{H}}_0)$, one has immediately
\begin{equation}
[\pi(a),{\cal{J}}^{-1}\pi(b^\dag){\cal{J}}]\in{\cal{K}}({\cal{H}}_0),\ [[\pi(a),D],{\cal{J}}^{-1}\pi(b^\dag){\cal{J}}]\in{\cal{K}}({\cal{H}}_0),\ \forall 
a,b\in{\cal{M}}_\theta.
\end{equation}
Therefore, the point ii) of the theorem is verified. Let us compute the LHS of the first relation. One has for any $f=(f_i)_{i=1,...,4}\in{\cal{H}}_0$ and $b\in{\cal{M}}_\theta$,  ${\cal{J}}^{-1}\pi(b^\dag){\cal{J}}f=(R(b)\otimes\bbone_4)f$ where $R$ is the right multiplication. But one can also write $\pi(a) f=(L(a)\otimes\bbone_4)f$ where $L$ is the left multiplication which verifies $[L(a)\otimes\bbone_4,R(b)\otimes\bbone_4]=0$ for any $a,b\in{\cal{M}}_\theta$. Hence the commutant condition i) holds true.
\end{proof}
\begin{remark}\label{weak-struc} At this point, some comments are in order:
\begin{itemize}
\vspace*{-4pt}
\setlength{\itemsep}{-1pt}
\item i) It is worth pointing out that we consider here a weak notion of real structure, as the one used e.g in spectral triples on quantum groups for which commutant condition and first order condition are fullfilled only modulo a 2-sided ideal in compact operators \cite{weak-real}.
\item ii) The Dirac operator used in Theorem \ref{th2} does not satisfies the Leibnitz rule. This latter , if it would have been satisfied, would have implied that the first order condition is verified, $[[\pi(a),D],{\cal{J}}^{-1}\pi(b^\dag){\cal{J}}]=0$.
\item iii) Notice that Proposition \ref{KO-real} still holds true with ${\cal{J}}$ replaced by ${\cal{J}}^\prime=\gamma_2\mathfrak{J}{\cal{C}} $ where $\mathfrak{J}$ given by \eqref{isomet-basic}. This provides another weak real structure for the spectral triple of 
Theorem \ref{th2} with ${\cal{J}}$ replaced by ${\cal{J}}^\prime$, with the same properties, except the commutant property, point i) of Theorem \ref{th2}, which is verified only modulo compact operators.
\end{itemize}
\end{remark}


\section{Discussion}\label{discuss}

Bounded or unbounded Jacobi operators defined by 3-term recurrence are usually associated with families of hypergeometric orthogonal polynomials pertaining to the Askey scheme. These operators appear as kinetic operators in NCFT when expressed as ''matrix models'' within some matrix base formalism, according to the usual liturgy of NCFT, provided some law of indice conservation holds true. This is the case of NCFT on Moyal space and Moyal plane, e.g the Grosse-Wulkenhaar model and its extensions and the scalar NCFT and gauge theory models on $\mathbb{R}^3_\lambda$. Thus, the case of kinetic Jacobi operator covers numerous NCFT of interest. \par
In this paper, we have only considered the case of bounded Jacobi operators for which interesting conclusions can already be drawn. The unbounded case will be considered elsewhere \cite{unboud-jac}. As a representative example, we have analyzed in detail, using the mathematical toolkit developed in the section 2, a particular noncommutative gauge theory model on the Moyal plane introduced in \cite{MVW13}. It is obtained from the so called induced gauge theory of \cite{GWW,GW07} expanded around a non trivial symmetric vacuum and suitably gauge-fixed. For related details, \cite{MVW13}. By the way, a short review on the present situation of noncommutative gauge theories is given in subsection \ref{section-review}. The resulting kinetic operator $G$ is bounded Jacobi, related to Chebyshev polynomials of 2nd order.\par 
A simple application of Section 2 to the above NCFT yields a complete characterization of the (spectral) properties of the kinetic operator, its full spectrum and the corresponding propagator, therefore improving and clarifying the results of \cite{MVW13}. This combined with general considerations permits one to obtain robust conclusions on the expected perturbative behavior of the NCFT as well as on information on its physical properties that we now discuss. First, the UV and IR regions are naturally identified respectively with large and small indices, says $m,n,...\gg 1$ or $m,n,... =0,1$, by simply rescaling the gauge-fixed action by a factor $g^2$ with mass dimension $[g]=1$ (hence $[\mu]=1$), which then from \eqref{spectrum-full} correspond respectively to large or small eigenvalues for the ''Laplacian'' $G$.
From Lemma \ref{lemma-spectrum-full}, one sees that $G$ has $0$ as a limit point in its spectrum which can be interpreted as a kinetic operator of a ''massless'' theory. Accordingly, the corresponding propagator Lemma \ref{propagator-spectform} is unbounded. At the perturbative level, one thus expect that this NCFT will exhibit - correlations at large separation between indices, which can be realized at the level of the eigenvalues of the propagator which do not decay in the UV region. This is unlike, e.g the Grosse-Wulkenhaar model \cite{gw1,gw2} or the scalar of gauge NCFT on $\mathbb{R}^3_\lambda$ \cite{gvw13} characterized by propagators with sufficient UV damping and suppression of correlations at large indices separation.\par
Such a behavior for the above gauge model renders a perturbative treatment questionable and simply reflects the general fact that a bounded (kinetic) operator cannot give rise to a compact propagator (i.e inverse) with a decaying behavior (in e.g some UV region). As far as the Askey-scheme is concerned, bounded Jacobi operators correspond to the serie of orthogonal Jacobi polynomials and descendants (Jacobi, Gegenbauer, Chebyshev of type 1 and 2, spherical, Legendre polynomials). Any related NCFT and/or matrix model is expected to have a similar problematic perturbative behavior. This however, does not forbid some among these models to be solvable.\par

The cubic vertex in the gauge-fixed action triggers the appearance at the one-loop order of a non-vanishing one-point (tadpole) function. The corresponding computation is given in the \ref{onepoint-comput} for the sake of completeness. This is already the case for the gauge theory on $\mathbb{R}^3_\lambda$ considered in \cite{gvw13}. One-loop non-vanishing tadpole signals quantum vacuum instabilities and an interesting question that deserves further investigation is to examine if a noncommutative analog of the Goldstone theorem can be defined.\par

The initial action that gave rise to the NCFT studied here is the so called induced gauge theory action \eqref{inducedgaugematrix}. Recall that this later can be related to a particular NCG described by a finite volume spectral triple \cite{finite-vol} homothetic, as noncommutative metric spaces, to the standard noncommutative Moyal space \cite{homot-moyal}. However, the NCFT \eqref{actmatrix} is obtained from \eqref{inducedgaugematrix} after an expansion around some vacuum and a gauge-fixing which thus modify the initial kinetic operator in \eqref{inducedgaugematrix} and so the related Dirac operator. Then, one can expect a change in the related NCG compared to the one described by a finite volume triple. By using a Dirac operator obtained from the kinetic operator, we have shown that one can construct an even, regular spectral triple with weak real structure and KO-dimension equal to 2, obeying the commutant condition and the first order condition only modulo compact operators. This spectral triple however does not define a noncommutative metric space for the Connes spectral distance, unlike the initial NCG.\par


\vskip 2 true cm
{\bf{Acknowledgments}}: Discussions and correspondences with P. Martinetti, N. Pinamonti, H. Steinacker and P. Vitale are gratefully acknowledged. One of us (J.C.W) thanks M. Dubois-Violette for discussions on mathematical aspects related to the first order conditions and twists in noncommutative geometry.


\vskip 1 true cm

\setcounter{section}{0}
\appendix



\section{Definitions and spectral theorem for bounded operators}\label{appendix1}

For details, see e.g \cite{orthop-1}. Let ${\cal{H}}_1$ and ${\cal{H}}_2$ be 2 separable Hilbert spaces on $\mathbb{C}$ with inner product $\langle, \rangle_{{\cal{H}}_i}$, $i=1,2$. Let $T:{\cal{H}}_1\to{\cal{H}}_2$ be a {\it{linear}} operator. We recall that the operator norm is defined by $||T||:=\sup\left\{\frac{||Tu||_{{\cal{H}}_2}}{||u||_{{\cal{H}}_1}},\ u\in{\cal{H}}_1 \right\}$. The following usefull definitions that will be used in the paper are collected below.

\begin{adefinition}
T is bounded if $||T||<\infty$. The adjoint operator of a bounded operator is a linear map $T^*:{\cal{H}}_2\to{\cal{H}}_1$ satisfying $\langle u, T^*v\rangle_{{\cal{H}}_1}=\langle Tu,v \rangle_{{\cal{H}}_2}$, $\forall u\in{\cal{H}}_1,\ \forall v\in{\cal{H}}_2$. An operator $T:{\cal{H}}_1\to{\cal{H}}_1$ is 
self-adjoint if $T^*=T$; it is normal if $TT^*=T^*T$; it is unitary if $TT^*=T^*T=\bbone_{\cal{H}}$. An operator $T:{\cal{H}}_1\to{\cal{H}}_1$ is symmetric if $\langle Tu,v \rangle=\langle u,Tv\rangle$. A projection is defined as a linear operator $P:{\cal{H}}_1\to{\cal{H}}_1$ such that $P^2=P$.
\end{adefinition}

One of the main notions entering the version of the spectral theorem we use in this paper is the resolution of the identity of a Hilbert space. A convenient definition is provided below:
\begin{adefinition}\label{resolution}
A resolution of the identity $E$ of a Hilbert space ${\cal{H}}$ is a Borel measure on $\mathbb{R}$, projection valued, satisfying for any Borel set $U,\ V$ the following properties: i) $E(U)$ is a self-adjoint projection, ii) $E(U\cap V)=E(U)E(V)$, iii) $E(\emptyset)=0,\ E(\mathbb{R})=\bbone_{{\cal{H}}}$, iv) $U\cap V=\emptyset\Rightarrow E(U\cup V)=E(U)+E(V)$, v) for any $u,v\in\cal{H}$ a complex Borel measure is defined by the map $U\mapsto E_{u,v}=\langle E(U)u,v \rangle_{\cal{H}}$.
\end{adefinition}
Let ${\cal{B}}({\cal{H}})$ denotes the Banach algebra of bounded operators on ${\cal{H}}$. Recall that the resolvent set of any $T\in{\cal{B}}({\cal{H}})$, $\rho(T)$, is defined as $\rho(T):=\{z\in\mathbb{C}, R(z):=(T-z\bbone)^{-1}\in{\cal{B}}({\cal{H}})  \}$ where $R(z)$ is the resolvent operator. The spectrum of $T\in{\cal{B}}({\cal{H}})$ is defined by $spec(T):=\mathbb{C}/\rho(T)$ and is thus the subset of $\mathbb{C}$ for which $(T-z\bbone)$ has no bounded inverse. The spectrum of a bounded operator $T$ is a compact subset of the disk of radius $||T||$ while a bounded self-adjoint operator has real spectrum which is a subset of 
$[-||T||,||T||]$.\par

In this paper, we deal actually with bounded self-adjoint operators. The convenient version of the spectral theorem for these operators{\footnote{General definitions and properties for operators are recalled in the appendix \ref{appendix1}.}} that will provide the integration measures ensuring orthogonality among each family of polynomials can be stated as:
\begin{atheorem}\label{orthop-1}
Let $T:{\cal{H}}\to{\cal{H}}$ denotes a bounded self-adjoint operator. There exists a unique resolution of the identity on ${\cal{H}} $ $E$ satisfying 
\begin{equation}
\langle Tu,v \rangle_{\cal{H}}=\int_{\mathbb{R}}t\ dE_{u,v}(t)\label{spectral-kinetic-operat}
\end{equation}
and which is supported on the spectrum of $T$, $spec(T)$, verifying $spec(T)\subset[-||T|| ,||T||]$. For any Borel set $U\subset\mathbb{R}$, $E(U)$ commutes with $T$.
\end{atheorem}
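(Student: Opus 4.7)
The plan is to build the resolution of identity $E$ by constructing a Borel functional calculus for $T$ in three successive steps: first a continuous functional calculus on $C(\text{spec}(T))$, then an extension to bounded Borel functions via Riesz--Markov duality, and finally reading off $E$ from the image of characteristic functions.

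First I would dispatch the spectral localization $\text{spec}(T)\subset[-\|T\|,\|T\|]$. The norm bound $|\text{spec}(T)|\le\|T\|$ follows from the Neumann series: for $|z|>\|T\|$, $-z^{-1}\sum_{n\ge 0}(T/z)^n$ converges in ${\cal{B}}({\cal{H}})$ to a bounded inverse of $T-z\bbone$, so $z\in\rho(T)$. Reality of the spectrum exploits self-adjointness: for $z=a+ib$ with $b\ne 0$, the identity $\|(T-z\bbone)u\|^2=\|(T-a\bbone)u\|^2+b^2\|u\|^2\ge b^2\|u\|^2$ makes $T-z\bbone$ bounded below with closed range, and the same estimate for $T-\bar z\bbone$ rules out a nontrivial orthogonal complement, giving surjectivity.

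Next I would build the continuous calculus. For polynomials $p\in\mathbb{C}[X]$, the map $p\mapsto p(T)$ is an obvious unital $*$-homomorphism, and the key isometry identity $\|p(T)\|=\sup_{\lambda\in\text{spec}(T)}|p(\lambda)|$ follows by combining the polynomial spectral mapping theorem with the fact that for bounded self-adjoint (in particular normal) operators the spectral radius equals the operator norm, itself a consequence of $\|T^2\|=\|T\|^2$ iterated with Gelfand's formula. Since polynomials are uniformly dense in $C(\text{spec}(T))$ by the Weierstrass theorem, this extends uniquely to an isometric $*$-homomorphism $\Phi\colon C(\text{spec}(T))\to{\cal{B}}({\cal{H}})$, $f\mapsto f(T)$.

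For each pair $u,v\in{\cal{H}}$ the functional $f\mapsto\langle\Phi(f)u,v\rangle$ is bounded on $C(\text{spec}(T))$, so by Riesz--Markov--Kakutani there is a unique regular complex Borel measure $E_{u,v}$ supported on $\text{spec}(T)$ with $\langle\Phi(f)u,v\rangle=\int f\, dE_{u,v}$. Specializing to $f(t)=t$ delivers \eqref{spectral-kinetic-operat}. Extending $\Phi$ to bounded Borel $g$ by declaring $(u,v)\mapsto\int g\, dE_{u,v}$ a sesquilinear form of norm at most $\|g\|_\infty$ and invoking the Riesz representation theorem for sesquilinear forms produces $\Phi(g)\in{\cal{B}}({\cal{H}})$. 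I then set $E(U):=\Phi(\chi_U)$ for each Borel $U\subset\mathbb{R}$. The axioms of Definition \ref{resolution} are then mechanical: self-adjoint projection from $\chi_U^2=\overline{\chi_U}=\chi_U$, multiplicativity $E(U\cap V)=E(U)E(V)$ from $\chi_{U\cap V}=\chi_U\chi_V$, the normalization from $\chi_\emptyset=0$, $\chi_\mathbb{R}=1$, finite additivity from disjointness of supports, and the commutation $E(U)T=TE(U)$ from the commutativity of the image of the continuous calculus. Uniqueness is the uniqueness clause of Riesz--Markov: any competing resolution $E'$ satisfying \eqref{spectral-kinetic-operat} fixes every polynomial moment of $E'_{u,v}$, which by Weierstrass density pins down the measures, hence $E'=E$.

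The principal technical obstacle is making the passage from continuous to Borel functional calculus rigorous, in particular checking that $\Phi$ remains multiplicative on bounded Borel functions so that each $E(U)=\Phi(\chi_U)$ is genuinely a projection. The identity $\Phi(gh)=\Phi(g)\Phi(h)$ cannot be read off the measures $E_{u,v}$ in a single stroke; the standard remedy is a monotone-class bootstrap, upgrading the identity from continuous $g,h$ to bounded Borel ones one factor at a time, using that bounded pointwise convergence of integrands yields weak-operator convergence of $\Phi$ applied to the limits (via dominated convergence applied to each $E_{u,v}$). Once multiplicativity on bounded Borel functions is established, the projection property $E(U)^2=E(U)$ and support on $\text{spec}(T)$ (from the fact that $\chi_{\mathbb{R}\setminus\text{spec}(T)}$ vanishes as an element of $C(\text{spec}(T))$ before extension) follow without further labour.
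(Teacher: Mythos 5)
The paper never proves Theorem \ref{orthop-1}: it is stated in the appendix as standard background, with a pointer to Dunford and Schwartz, so there is no in-house argument to measure yours against. Your outline is the classical proof of the bounded spectral theorem --- Neumann series plus the self-adjointness identity for the localization $spec(T)\subset[-||T||,||T||]$, polynomial calculus promoted to $C(\mathrm{spec}(T))$ via the spectral mapping theorem, the equality of spectral radius and norm for normal operators, and Stone--Weierstrass, then Riesz--Markov to produce the measures $E_{u,v}$, a sesquilinear-form and monotone-class bootstrap to reach bounded Borel functions, and finally $E(U):=\Phi(\chi_U)$ --- and it is sound; this is exactly the route the cited references take. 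Two points deserve more care if you write it out in full. First, in the uniqueness step, \eqref{spectral-kinetic-operat} by itself pins down only the \emph{first} moment of a competing $E'_{u,v}$; to get all polynomial moments you must use the multiplicativity axiom ii) of Definition \ref{resolution} for $E'$, which yields $\int t^n\,dE'_{u,v}(t)=\langle T^nu,v\rangle$ for every $n$. Second, a competing $E'$ is a priori only a projection-valued Borel measure on all of $\mathbb{R}$, so before invoking Weierstrass density you must show its support lies in a fixed compact set, e.g.\ by testing $\int_U t^{2n}\,dE'_{u,u}(t)\le ||T||^{2n}||u||^2$ against a Borel set $U$ disjoint from $[-||T||,||T||]$ to force $E'(U)=0$. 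Neither issue affects the architecture of your argument.
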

The above spectral theorem yields the well known functional calculus for self-adjoint operators. It turns out that the above measure $E_{u,v}$ can be actually computed from the Stieltjes-Perron inversion formula:
\begin{atheorem}\label{stieltjes-perron}
Let $T:{\cal{H}}\to{\cal{H}}$ denotes a bounded self-adjoint operator with resolvent operator $R(z)$. For any open set $]a,b[\subset\mathbb{R}$, $u,v\in\cal{H}$, one has in the strong operator topology sense:
\begin{equation}
E_{u,v}(]a,b[)=\lim_{\varepsilon\to0}\lim_{\nu\to0 }\int_{a+\varepsilon}^{b-\varepsilon}\langle R(x+i\nu)u,v \rangle-\langle R(x-i\nu)u,v \rangle dx.
\end{equation}
\end{atheorem}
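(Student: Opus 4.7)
The plan is to exploit the functional calculus delivered by Theorem~\ref{orthop-1} and reduce the whole statement to a Poisson-kernel limit applied to the finite complex measure $dE_{u,v}$. First I would extend \eqref{spectral-kinetic-operat} via the standard bounded Borel functional calculus to write $\langle f(T)u,v\rangle=\int_\mathbb{R} f(t)\,dE_{u,v}(t)$ for every bounded Borel $f$ on $\mathrm{spec}(T)$. Applied to $f(t)=1/(t-z)$ with $z\in\mathbb{C}\setminus\mathrm{spec}(T)$, this yields
\begin{equation*}
\langle R(z)u,v\rangle=\int_\mathbb{R}\frac{dE_{u,v}(t)}{t-z},\qquad z\in\mathbb{C}\setminus\mathbb{R},
\end{equation*}
since the spectrum is real.

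Next, for fixed $\nu>0$ and $x\in\mathbb{R}$ I would form the jump across the real axis and recognize a Poisson kernel:
\begin{equation*}
\langle R(x+i\nu)u,v\rangle-\langle R(x-i\nu)u,v\rangle=\int_\mathbb{R}\frac{2i\nu\,dE_{u,v}(t)}{(t-x)^2+\nu^2}.
\end{equation*}
Fubini's theorem applies on $[a+\varepsilon,b-\varepsilon]\times\mathbb{R}$ since the integrand is bounded and $|E_{u,v}|$ is a finite complex measure (its total variation is controlled by $\|u\|\,\|v\|$ by polarization from the positive resolutions $E_{u,u}$, $E_{v,v}$). Integrating in $x$ first,
\begin{equation*}
\int_{a+\varepsilon}^{b-\varepsilon}\frac{2i\nu\,dx}{(t-x)^2+\nu^2}=2i\,\phi_{\nu,\varepsilon}(t),\qquad \phi_{\nu,\varepsilon}(t):=\arctan\!\left(\tfrac{b-\varepsilon-t}{\nu}\right)-\arctan\!\left(\tfrac{a+\varepsilon-t}{\nu}\right),
\end{equation*}
a function uniformly bounded by $\pi$ which converges pointwise, as $\nu\to 0^{+}$, to $\pi\,\mathbf{1}_{(a+\varepsilon,b-\varepsilon)}(t)$ at every $t\notin\{a+\varepsilon,b-\varepsilon\}$.

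The dominated convergence theorem against $|E_{u,v}|$ then gives
\begin{equation*}
\lim_{\nu\to 0^{+}}\int_{a+\varepsilon}^{b-\varepsilon}\bigl[\langle R(x+i\nu)u,v\rangle-\langle R(x-i\nu)u,v\rangle\bigr]dx=2\pi i\,E_{u,v}\bigl((a+\varepsilon,b-\varepsilon)\bigr),
\end{equation*}
up to the contribution $\tfrac{\pi i}{2}\bigl(E_{u,v}(\{a+\varepsilon\})+E_{u,v}(\{b-\varepsilon\})\bigr)$ at the two endpoints. I would then send $\varepsilon\to 0^{+}$ through a sequence avoiding the at most countable set of atoms of $|E_{u,v}|$, so that the boundary terms vanish; continuity of the spectral measure from below yields $E_{u,v}((a+\varepsilon,b-\varepsilon))\to E_{u,v}((a,b))$, which is the claim (the strong-operator statement follows since the identity is proven on arbitrary matrix elements). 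The main obstacle is precisely this endpoint book-keeping: if $a$ or $b$ themselves were atoms the naive simultaneous limit would fail, and one must justify carefully that the prescribed ordering $\lim_{\varepsilon}\lim_{\nu}$ absorbs the endpoint half-jumps by choosing $\varepsilon$ outside an exceptional countable set before passing $\varepsilon\to 0^{+}$.
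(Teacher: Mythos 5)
The paper offers no proof of this theorem: it is stated in the appendix as a standard fact (Stone's formula / Stieltjes--Perron inversion) with a pointer to Dunford--Schwartz, so there is no in-paper argument to compare against. Your route is the classical one --- represent the resolvent as the Cauchy--Stieltjes transform of $E_{u,v}$, recognize the jump across the real axis as a Poisson kernel, apply Fubini, and pass to the limit --- and its overall structure is sound.

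Three points need fixing. First, and most importantly, your own computation yields $2\pi i\,E_{u,v}\bigl((a,b)\bigr)$, not $E_{u,v}\bigl((a,b)\bigr)$: the displayed formula in the statement is missing the normalization $\tfrac{1}{2\pi i}$ (equivalently $\tfrac{1}{\pi}\Im$, which is the form actually invoked in the proof of the Favard theorem in Section 2 of the paper). You cannot end with ``which is the claim''; you must either insert that factor or explicitly flag that the statement as printed is off by it. Second, the endpoint contribution is $\pi i\,\bigl(E_{u,v}(\{a+\varepsilon\})+E_{u,v}(\{b-\varepsilon\})\bigr)$, since $2i\cdot\lim_{\nu\to 0^{+}}\phi_{\nu,\varepsilon}(a+\varepsilon)=2i\cdot\tfrac{\pi}{2}=\pi i$; your coefficient $\tfrac{\pi i}{2}$ is off by a factor of $2$ (harmless, since these terms are discarded, but it should be right). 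Third, two small gaps in the limiting step: (a) restricting $\varepsilon$ to a sequence avoiding atoms only establishes the iterated limit along that sequence, whereas the statement asserts $\lim_{\varepsilon\to 0}$; in fact no selection is needed, because $|E_{u,v}|$ is a finite measure, so $\sum_n|E_{u,v}|(\{a+\varepsilon_n\})<\infty$ for any sequence of distinct $\varepsilon_n\to 0$, which forces $|E_{u,v}|(\{a+\varepsilon\})\to 0$ unconditionally as $\varepsilon\to 0$, and the same for the other endpoint; (b) the ``strong operator topology'' assertion does not follow merely from proving the identity on all matrix elements --- weak convergence does not imply strong convergence in general. Here it does follow, but via the functional calculus: the approximants are $f_{\nu,\varepsilon}(T)$ with $f_{\nu,\varepsilon}=\tfrac{1}{\pi}\phi_{\nu,\varepsilon}$ uniformly bounded by $1$ and converging pointwise to $\mathbf{1}_{(a,b)}$ (after the $\varepsilon$-limit), so that $\|(f_{\nu,\varepsilon}(T)-E((a,b)))u\|^{2}=\int|f_{\nu,\varepsilon}-\mathbf{1}_{(a,b)}|^{2}\,dE_{u,u}\to 0$ by dominated convergence. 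With these repairs the argument is complete.
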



\section{One-point function computation}\label{onepoint-comput}

Adding to $S[\phi]$ \eqref{actmatrix} a source term $\sum_{m,n}\phi_{mn}J_{nm}$, the perturbative expansion stems from the generating functional of the connected correlation functions $W[J]$. It is defined by 
\begin{eqnarray}
{\cal{Z}}[J]&=&\int\prod_{m,n}d\phi_{mn}e^{-S[\phi]-\sum_{m,n}\phi_{mn}J_{nm}}=e^{W[J]}\label{generating-Z}\\
W[J]&=&\ln{\cal{Z}}(0)+W_0[J]+\ln\big(1+e^{-W_0[J]} ( e^{ -S_{int}(\frac{\delta}{\delta J})}   -1)e^{W_0[J]} \big)\label{connectgreen}\\
W_0[J]&=&\frac{1}{2}\sum_{m,n,k,l}J_{mn}P_{mn;kl}J_{kl}\label{connectfree}.
\end{eqnarray}
$S_{int}[\phi]$ denotes the interaction terms \eqn{interactmatrix} with $\Omega^2=\frac{1}{3}$ and the $J_{nm}$'s are sources. The propagator  $P_{mn;kl}$ is readily obtained from Lemma \ref{propagator-spectform} and \eqref{conservation} . Expanding the logarithm as a formal series yields all the connected diagrams while the effective action $\Gamma[\phi]$ is  obtained from $W[J]$ by Legendre transform
\begin{equation}
\Gamma[\phi]=\sum_{m,n}\phi_{mn}J_{nm}-W[J],\ \phi_{mn}=\frac{\delta W[J]}{\delta J_{nm}}\vert_{J=0}\label{effectiveaction}.
\end{equation}
The formal expansion of \eqref{connectgreen} gives rise to
\begin{eqnarray}
W^{1}[J]&=& \sum i\frac{2}{3}{\sqrt{3\mu^2}}(\delta_{m+1,n}-\delta_{m,n+1})\big((P_{lm;kl}+P_{kl;lm})(P_{nk;cd}J_{cd}+J_{ab}P_{ab;nk})\nonumber\\
&+&(P_{lm;nk}+P_{nk;lm})(P_{kl;cd}J_{cd}+J_{ab}P_{ab;kl})\nonumber\\
&+&(P_{kl;nk}+P_{nk;kl})(P_{lm;cd}J_{cd}+J_{ab}P_{ab;lm}) \big).\label{w1}
\end{eqnarray}
By usual Legendre transform applied to \eqref{w1}, $J_{mn}=\sum_{k,l}G_{nm;kl}\phi_{kl}+...$, one obtains finally ($\sigma=i\frac{2}{3}{\sqrt{3\mu^2}}$):
\begin{eqnarray}
\Gamma^1[\phi]&=&\sigma\sum_{m,n,k,l} v_{mn}\big(\delta_{mk}(P_{ll}+P_{km})\phi_{kn} +\delta_{nl}(P_{kk}+P_{nl})\phi_{ml}\nonumber\\
&+&(\delta_{m+l,n+k}P_{lk}+\delta_{n+k,m+l}P_{nm})\phi_{lk} \big)\nonumber\\
&=&\sigma\sum(2P_{ll}-P_{l,l+1}+P_{kk}+P_{k+1,k+1}-P_{k,k+1})(\phi_{k,k+1}-\phi_{k+1,k})\label{cgamma1}.
\end{eqnarray}


\end{document}